\documentclass[lettersize,journal]{IEEEtran}
\IEEEoverridecommandlockouts
\usepackage{cite}
\hyphenation{op-tical net-works semi-conduc-tor IEEE-Xplore}
\usepackage[caption=false,font=normalsize,labelfont=sf,textfont=sf]{subfig}
\usepackage{amsthm,amsmath,amssymb,amsfonts}
\usepackage{algorithmic}
\usepackage{graphicx}
\usepackage{textcomp}
\usepackage{xcolor}
\def\BibTeX{{\rm B\kern-.05em{\sc i\kern-.025em b}\kern-.08em
		\kern-.1667em\lower.7ex\hbox{E}\kern-.125emX}}

\usepackage{url}
\usepackage[T1]{fontenc}

\usepackage{soulutf8}
\usepackage{tikz}
\usepackage{pgfplots}
\usepackage{bm}
\usepackage{cite}
\usepackage{acro} \acsetup{format/foreign = {\emph}}
\usepackage[capitalise]{cleveref}

\newtheorem{conjecture}{Conjecture}

\newtheorem{theorem}{Theorem}
\newtheorem{corollary}{Corollary}
\newtheorem{definition}{Definition}
\newtheorem{lemma}{Lemma}

\usepackage{soul}
\soulregister\ac7

\DeclareAcronym{PLS}{
	short = PLS,
	long  = physical layer security,
	tag = acrs,
}

\DeclareAcronym{FDM}{
	short = FDM,
	long  = frequency-division multiplexing,
	tag = acrs,
}
\DeclareAcronym{OFDM}{
	short = OFDM,
	long  = orthogonal frequency-division multiplexing,
	tag = acrs,
}
\DeclareAcronym{OFDMA}{
	short = OFDMA,
	long  = orthogonal frequency-division multiple access,
	tag = acrs,
}
\DeclareAcronym{LTE}{
	short = LTE,
	long  = long term evolution,
	tag = acrs,
}
\DeclareAcronym{4G}{
	short = 4G,
	long  = fourth generation of mobile network,
	tag = acrs,
}
\DeclareAcronym{5G}{
	short = 5G,
	long  = fifth-generation,
	tag = acrs,
}
\DeclareAcronym{SDMA}{
	short = SDMA,
	long  = space division multiple access,
	tag = acrs,
}
\DeclareAcronym{SDMA-OFDM}{
	short = SDMA-OFDM,
	long  = space division multiple access orthogonal frequency-division multiplexing,
	tag = acrs,
}
\DeclareAcronym{NOMA}{
	short = NOMA,
	long  = non-orthogonal multiple access,
	tag = acrs,
}
\DeclareAcronym{MIMO}{
	short = MIMO,
	long  = multiple-input-multiple-output,
	tag = acrs,
}
\DeclareAcronym{MIMO-NOMA}{
	short = MIMO-NOMA,
	long  = multiple-input-multiple-output non-orthogonal multiple access,
	tag = acrs,
}
\DeclareAcronym{CNN}{
	short = CNN,
	long  = convolutional neural network,
	tag = acrs,
}
\DeclareAcronym{LOS}{
	short = LOS,
	long  = line-of-sight,
	tag = acrs,
}
\DeclareAcronym{NLOS}{
	short = NLOS,
	long  = non-line-of-sight,
	tag = acrs,
}
\DeclareAcronym{WINNER}{
	short = WINNER,
	long  = Wireless World Initiative for New Radio,
	tag = acrs,
}
\DeclareAcronym{QuaDRiGa}{
	short = QuaDRiGa,
	long  = quasi deterministic radio channel generator,
	tag = acrs,
}
\DeclareAcronym{MT}{
	short = MT,
	long  = mobile terminal,
	tag = acrs,
}
\DeclareAcronym{BS}{
	short = BS,
	long  = base station,
	tag = acrs,
}
\DeclareAcronym{NR}{
	short = NR,
	long  = new radio,
	tag = acrs,
}
\DeclareAcronym{AWGN}{
	short = AWGN,
	long  = additive white Gaussian noise,
	tag = acrs,
}
\DeclareAcronym{SNR}{
	short = SNR,
	long  = signal-to-noise ratio,
	tag = acrs,
}
\DeclareAcronym{ADAM}{
	short = ADAM,
	long  = adaptive moment estimation,
	tag = acrs,
}
\DeclareAcronym{MSE}{
	short = MSE,
	long  = mean squared error,
	tag = acrs,
}
\DeclareAcronym{MAC}{
	short = MAC,
	long  = media access control,
	tag = acrs,
}
\DeclareAcronym{BER}{
	short = BER,
	long  = bit error rate,
	tag = acrs,
}
\DeclareAcronym{WMSN}{
	short = WMSN,
	long  = wireless multimedia sensor network,
	tag = acrs,
}
\DeclareAcronym{WSN}{
	short = WSN,
	long  = wireless sensor network,
	tag = acrs,
}
\DeclareAcronym{BTC}{
	short = BTC,
	long  = brain-type communications,
	tag = acrs,
}
\DeclareAcronym{IoT}{
	short = IoT,
	long  = Internet of Things,
	tag = acrs,
}
\DeclareAcronym{PW-MAC}{
	short = PW-MAC,
	long  = predictive-wakeup MAC,
	tag = acrs,
}
\DeclareAcronym{TDMA}{
	short = TDMA,
	long  = time-division multiple access,
	tag = acrs,
}
\DeclareAcronym{FDMA}{
	short = FDMA,
	long  = frequency-division multiple access,
	tag = acrs,
}
\DeclareAcronym{CDMA}{
	short = CDMA,
	long  = Ccde-division multiple access,
	tag = acrs,
}
\DeclareAcronym{BPSK}{
	short = BPSK,
	long  = binary phase-shift keying,
	tag = acrs,
}
\DeclareAcronym{QPSK}{
	short = QPSK,
	long  = quadrature phase-shift keying,
	tag = acrs,
}
\DeclareAcronym{MPSK}{
	short = MPSK,
	long  = $M$-ary phase-shift keying,
	tag = acrs,
}
\DeclareAcronym{QAM}{
	short = QAM,
	long  = quadrature amplitude modulation,
	tag = acrs,
}
\DeclareAcronym{RV}{
	short = RV,
	long  = random variable,
	tag = acrs,
}
\DeclareAcronym{pdf}{
	short = PDF,
	long  = probability density function,
	tag = acrs,
}
\DeclareAcronym{pmf}{
	short = PMF,
	long  = probability mass function,
	tag = acrs,
}
\DeclareAcronym{MAP}{
	short = MAP,
	long  = maximum a posteriori,
	tag = acrs,
}
\DeclareAcronym{MTE}{
	short = MTE,
	long  = meantime between events,
	tag = acrs,
}
\DeclareAcronym{SFC}{
	short = SFC,
	long  = semantic-functional communication,
	tag = acrs,
}
\DeclareAcronym{OOK}{
	short = OOK,
	long  = on-off keying,
	tag = acrs,
}
\DeclareAcronym{DM}{
	short = DM,
	long  = directional modulation,
	tag = acrs,
}
\DeclareAcronym{OSI}{
	short = OSI,
	long  = Open System Interconnect,
	tag = acrs,
}
\DeclareAcronym{DDM}{
	short = DDM,
	long  = dynamic directional modulation,
	tag = acrs,
}
\DeclareAcronym{LPDDM}{
	short = LPDDM,
	long  = low-power dynamic directional modulation,
	tag = acrs,
}
\DeclareAcronym{EVM}{
	short = EVM,
	long  = error vector magnitude,
	tag = acrs,
}
\DeclareAcronym{ASC}{
	short = ASC,
	long  = average secrecy capacity,
	tag = acrs,
}
\DeclareAcronym{SOP}{
	short = SOP,
	long  = secrecy outage probability,
	tag = acrs,
}
\DeclareAcronym{IQ}{
	short = IQ,
	long  = in-phase and quadrature,
	tag = acrs,
}
\DeclareAcronym{mmWave}{
	short = mmWave,
	long  = millimeter-wave,
	tag = acrs,
}
\DeclareAcronym{UAV}{
	short = UAV,
	long  = unmanned aerial vehicles,
	tag = acrs,
}
\DeclareAcronym{PA}{
	short = PA,
	long  = phased arrays,
	tag = acrs,
}
\DeclareAcronym{FDA}{
	short = FDA,
	long  = frequency diverse array,
	tag = acrs,
}
\DeclareAcronym{ZF}{
	short = ZF,
	long  = zero-forcing,
	tag = acrs,
}
\DeclareAcronym{RF}{
	short = RF,
	long  = radio frequency,
	tag = acrs,
}
\DeclareAcronym{SR}{
	short = SR,
	long  = secrecy rate,
	tag = acrs,
}
\graphicspath{{Figures/}}
\pgfplotsset{compat=1.17}

\usepackage{gensymb}

\begin{document}

\title{Low-Complexity Dynamic Directional Modulation: Vulnerability and Information Leakage}

\author{Pedro E. Gória Silva,
Adam Narbudowicz, \textit{Senior Member}, \textit{IEEE},
Nicola Marchetti, \textit{Senior Member}, \textit{IEEE}, \\
Pedro H. J. Nardelli,  \textit{Senior Member}, \textit{IEEE}, 
Rausley A. A. de Souza, \textit{Senior Member}, \textit{IEEE}, \\
and
Jules M. Moualeu, \textit{Senior Member}, \textit{IEEE}

\thanks{P. E. G. Silva and P. H. J. Nardelli  are with Lappeenranta--Lahti University of Technology, Finland (email: pedro.goria.silva@lut.fi, pedro.nardelli@lut.fi). 
P. E. G. Silva is also with INATEL, Brazil.
P. H. J. Nardelli  is also with University of Oulu, Finland. \\
A. Narbudowicz and N. Marchetti are with Trinity College Dublin, Ireland (email:{narbudoa; nicola.marchetti}@tcd.ie).
A. Narbudowicz is also working part-time with Wroclaw University of Science and Technology, Wroclaw, Poland.\\
R. A. A. de Souza is with National Institute of Telecommunications (Inatel), Santa Rita do Sapucaí 37540-000, Brazil (e-mail: rausley@inatel.br).\\
J. M. Moualeu is with the University of the Witwatersrand, Johannesburg, South Africa (e-mail: jules.moualeu@wits.ac.za). J. M. Moualeu is currently with Lappeenranta--Lahti University of Technology on a research visit.\\
This paper is partly supported by Academy of Finland via: (a) FIREMAN consortium n.326270 as part of CHIST-ERA grant CHIST-ERA-17-BDSI-003,  (b) EnergyNet Fellowship n.321265/n.328869/n.352654, (c) X-SDEN project n.349965, and (d) Science Foundation Ireland under grant 13/RC/2077\_P2 (CONNECT); by  CNPq (311470/2021-1); by São Paulo Research Foundation (FAPESP) (Grant No. 2021/06946-0); by RNP, with resources from MCTIC, Grant No. 01245.010604/2020-14, under the Brazil 6G project of the Radiocommunication Reference Center (\textit{Centro de Referência em Radiocomunicações} - CRR) of the National Institute of Telecommunications (\textit{Instituto Nacional de Telecomunicações} - Inatel), Brazil;
by the National Research Foundation (NRF) of South Africa under the BRICS Multilateral Research and Development Project (Grant No. 116018); and by  Business Finland under the project REEVA (n.10278/31/2022).
}}

\maketitle
\begin{abstract}
In this paper, the privacy of wireless transmissions is improved through the use of an efficient technique termed \ac{DDM}, and is subsequently assessed in terms of the measure of information leakage. 
Recently, a variation of \ac{DDM} termed \ac{LPDDM} has attracted significant attention as a prominent secure transmission method due to its ability to further improve the privacy of wireless communications. 
Roughly speaking, this modulation operates by randomly selecting the transmitting antenna from an antenna array whose radiation pattern is well known. 
Thereafter, the modulator adjusts the constellation phase so as to ensure that only the legitimate receiver recovers the information. 
To begin with, we highlight some privacy boundaries inherent to the underlying system. 
In addition, we propose features that the antenna array must meet in order to increase the privacy of a wireless communication system. 
Last, we adopt a uniform circular monopole antenna array with equiprobable transmitting antennas in order to assess the impact of \ac{DDM} on the information leakage. 
It is shown that the \acl{BER}, while being a useful metric in the evaluation of wireless communication systems, does not provide the full information about the vulnerability of the underlying system.

\end{abstract}
\begin{IEEEkeywords}
Physical layer security, information leakage, directional modulation, phased array, antenna array.
\end{IEEEkeywords}

\acresetall

\section{Introduction} \label{sec: Intro}
\IEEEPARstart{I}{n} recent years, \ac{PLS} has attracted much attention from the research community as a promising technology for securing communications over wireless channels (see \cite{Wu8335290,Jameel8437135,Moualeu8734105} and the references therein). 
In contrast to the traditional approach, which addresses information security through mathematically derived encryption techniques in the upper layers of the wireless network protocol stacks, \ac{PLS} or information-theoretic security is implemented at the lowest layer of the \ac{OSI} stack, i.e., the physical layer. 
It exploits the unique physical features of wireless propagation channels to mitigate the amount of information obtained by unintended receivers through eavesdropping or malicious attacks \cite{Zou7467419,Liu7539590,Hamamreh8509094,Narbudowicz9674846}.
Owing to its promising benefits, \ac{PLS} has been widely investigated in the existing literature as a potential candidate to safeguard \ac{5G} and beyond wireless communications, and is intended to complement the encryption-based method. 
However, there are drawbacks associated with the \ac{PLS} implementation, such as the significant memory consumption, which may impose strict requirements on wireless devices in the context of \ac{5G} technology \cite{Cheng9351767}. 

A recent \ac{PLS} technique referred to as \ac{DM} (see \cite{Daly5159486,Ding6746064,Narbudowicz7909004,Ding_Fusco_2015} and the references therein) has emerged as an efficient and secure transmission approach suitable for wireless communications, including \ac{mmWave}, \ac{UAV}, satellite communication, and smart transportation \cite{Wang8334230,NusenuShaddrackYaw2019DoFM}.
\ac{DM} has the potential to steer the intelligent base-band information in the desired direction while transmitting distorted signals in other directions. 
Specifically, the original information-bearing signal is only transmitted in a narrow directive beam toward the intended user. 
The genesis of \ac{DM} stems from \cite{Daly5159486}, which uses \ac{PA} to improve security as long as the eavesdropper is not in the same direction as the legitimate receiver.

Recent studies have revealed a high degree of complexity as a result of the combination of the antenna array and complex signal processing techniques. 
For instance, the authors in \cite{Xie8103767} propose a \ac{ZF} technique to reduce the computational complexity of \ac{DM}; although the proposed solution addresses the complexity issue, it comes at the expense of a large antenna array. 
The works \cite{Narbudowicz7909004, Parron9140321} propose means to miniaturize the antenna array but at the cost of hardware complexity since the system requires individual amplitude and phase control over each port and additional \ac{RF} chains.
The authors in \cite{Huang9345966} synthesize multi-carrier \acl{DM} symbols for \ac{PLS} through a meticulous design of a time-switching sequence and time-modulated \ac{PA}. 
However, this approach is impractical for \ac{IoT} devices or other low-cost devices, because of the high complexity deriving from the synchronization with high-degree accuracy in the switching mechanism \cite{Bogdan8657780}. 
Similar to \cite{Huang9345966}, the work \cite{Narbudowicz9674846} proposes a practical and simple scheme for \ac{IoT} devices that provides energy efficiency and cost-effectiveness through a single \ac{RF} chain and an array of closely spaced antennas.
It requires a switchable antenna array and a random number generator at the transmitter along with a simple receiver. Moreover, it does not introduce additional artificial noise in the direction of the legitimate receiver. 

Despite the growing interest in \ac{DM}-based \ac{PLS} transmissions, only a limited number of metrics have been evaluated in the existing literature. 
In \cite{Ding6746064}, the authors assess various metrics such as the \ac{EVM}, \ac{BER}, and \ac{SR}. 
The \ac{EVM} is commonly adopted to quantify the system performance without executing the demodulation process and to distinguish physical sources from distortion. 
To summarize, \ac{EVM} calculates the normalized mean of the square of the difference between the measured streams and the reference symbols in the \ac{IQ} space. 
Albeit the \ac{BER}, \ac{EVM}, and \ac{SR} are equivalent for dynamic \ac{DM} systems under the assumption of a zero-mean Gaussian distributed orthogonal interference, the authors in \cite{Ding6746064} point out a discrepancy between such metrics for static \ac{DM} systems. 
Recently, several studies have used the \ac{BER} or a variation of the \ac{SR} as a metric to evaluate the system performance \cite{Cheng9351767,Narbudowicz7909004,Narbudowicz9674846,Qu9664476}. 
In  \cite{Cheng9351767}, the system performance is evaluated in terms of the \ac{ASC}, \ac{SOP}, and \ac{BER}. 
It is worthwhile recalling the dependence of both the \ac{ASC} and \ac{SOP} on the \ac{SR}, which is intrinsically linked to the mutual information. 
However, the mutual information is not an adequate security metric\footnote{This will be subsequently discussed in the manuscript.}. 
A closely related work to our proposed study is \cite{Narbudowicz9674846} wherein a uniform circular array of mono-pole antennas with equiprobable transmitting antennas is proposed. 
In that work, Narbudowicz \textit{et al.} show that the privacy of a wireless communication can be achieved by randomly switching the transmitting antenna. 
Furthermore, the authors rely on the \ac{BER} as a performance metric to evaluate the system's privacy. 
%

Motivated by the preceding discussion, the proposed work aims at adopting other performance metrics as a means to evaluate the security of a wireless communication system in terms of vulnerability. 
Our stimulus is prompted in a way by the inefficacy of the \ac{BER} in gauging the system vulnerability. 
To the best of the authors' knowledge, our work is the first that adopts a theoretical approach to information and security on the new \ac{DM} scheme. 
The contributions of this work are as follows:
\begin{enumerate}
    \item[$\bullet$] We highlight some privacy boundaries inherent to the underlying system and provide useful insights on the way an eavesdropper can intercept confidential information intended for a legitimate receiver.
    \item[$\bullet$] We propose the design of an antenna array that minimizes the system vulnerability for an acceptable level of privacy. Moreover, we derive the fundamental limits of the system vulnerability.
    \item[$\bullet$] We adopt a uniform circular mono-pole antenna array with equiprobable transmitting antennas and then evaluate their impact on the measure of information leakage.
    \item[$\bullet$] We understand how the transmission of constellations can ensure privacy in the context of data confidentiality.
    \item[$\bullet$] We show that the system vulnerability can be significantly reduced for a large number of transmitting antennas.
\end{enumerate}


The rest of the paper is organized as follows. The description of the system model is provided in \cref{sec: Sys}, while the proposed vulnerability approach is elaborated in \cref{sec: VA}. \cref{sec: Design} presents the theoretical limits of the proposed system and discusses the antenna-array design that minimizes the information leakage. A case study is investigated and numerical results are illustrated in \cref{sec: Case Study}. Finally, \cref{sec: Conclusion} provides some concluding remarks.

%
%
%

\section{System Model} \label{sec: Sys}

\begin{figure*}[ht!]
    \centering
    \includegraphics[width=1\linewidth]{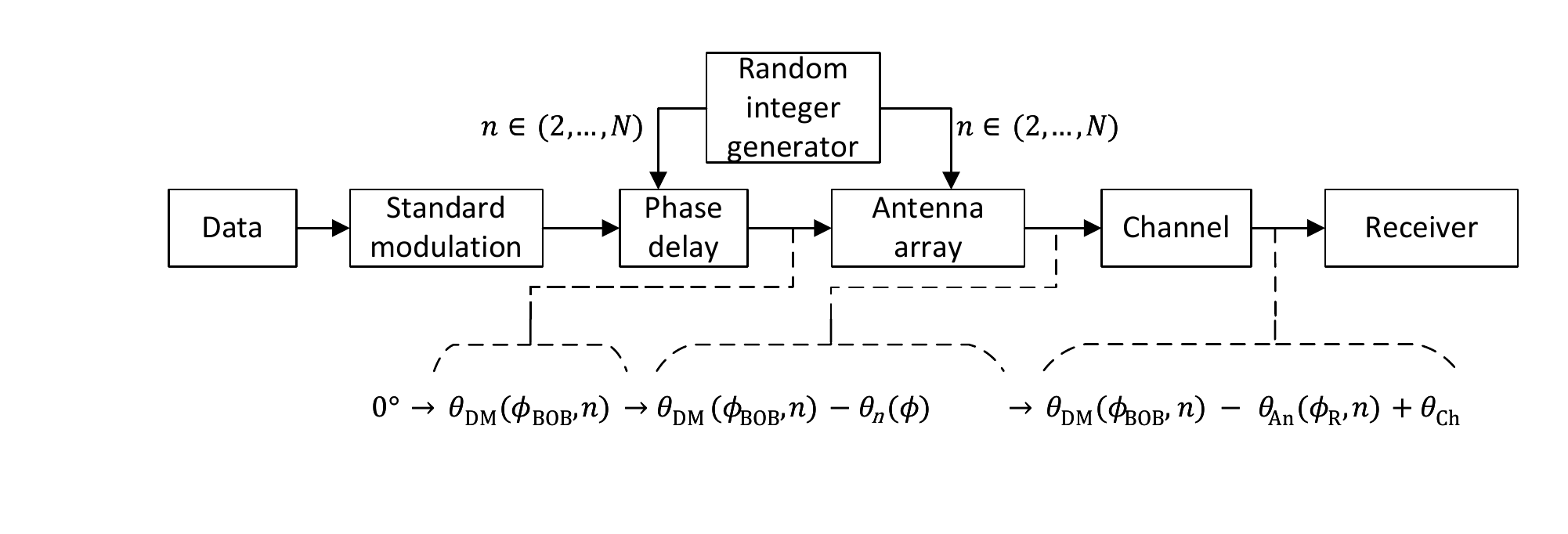}
    \caption{Block diagram of the energy efficient DM system.}  
    \label{fig: system}
\end{figure*}
%

The energy-efficient \ac{DM} system proposed in \cite{Narbudowicz9674846} consists of a standard modulator, a phase delayer, a circular antenna array, and a random integer generator.
The block diagram of the system and the phase displacement are presented in \cref{fig: system}.
The standard modulator block performs \ac{IQ} modulation of the data. 
It is worth mentioning that phase-based modulation (for instance, \acl{MPSK} or \acl{QAM}) represents the worst-case scenario for eavesdroppers to perform a successful attack; in brief, the security of the \ac{DM} scheme is based on phase distortion, and thus modulations whose information is not somehow contained in phase are more susceptible to successfully attack.
It is assumed that the phase displacement at the modulator output is null for the sake of simplicity. 
Let $\theta_{n}(\phi)$ be the phase shift generated by the radiation pattern (including the antenna's displacement) of the $n$-th antenna in the direction $\phi$.
The phase delay rotates the symbol by $\theta_{\text{DM}}(\phi_{\text{Bob}},n)$ radians, with $\phi_{\text{Bob}}$ and $n \in \{1, 2, \dots, N\}$ being the direction of the legitimate (intended) receiver and the transmitting antenna, respectively. For the proposed \ac{DM} scheme, this phase shift added in the phase delay block is given by
\begin{equation}\label{eq:DMphase}
\theta_{\text{DM}}(\phi_{\text{Bob}},n) = - \theta_{n}(\phi_{\text{Bob}}).
\end{equation}

The phase delay block works as a pre-distortion phase of the transmitted signal, which allows its correct reconstruction only along the intended direction $ \phi_{\text{Bob}} $.
Therefore, the total phase shift at the receiver is given by

\begin{equation}\label{eq:total_phase}
\theta_\text{Rx}(\phi, n) = \theta_{\text{DM}}(\phi_{\text{Bob}},n) + \theta_{n}(\phi) + \theta_{\text{Ch},n},
\end{equation}
where $\theta_{\text{Ch},n}$ is defined as the phase shift of the channel between the $n$th antenna and receiver, as indicated in \cref{fig: system}. 
Hereafter, we will omit the $\theta_{\text{Ch,n}}$ dependence on the transmitting antenna $n$ in order to simplify the notation. 
Note that $\theta_{\text{Ch},n} \forall n \in {1,2,\dots,N}$ are statistically identical.
After substituting (\ref{eq:DMphase}) into (\ref{eq:total_phase}), it can be seen that the phase shift from the phase delay block and the antenna array cancel out each other in the direction $\phi = \phi_{\text{Bob}}$, leaving the phase shift at the legitimate receiver to be caused exclusively by the channel delay. 
On the other hand, there is an additional phase component that changes as a function of the direction $\phi$, leading to the distortion of the transmitted signal for $\phi \neq \phi_{\text{Bob}}$ --- with the severity of the distortion being dependent on the antenna-array design.

However, if the eavesdropper obtains information about the radiation pattern used, it can somehow compensate for the additional phase shift.
The system aims to preclude this by randomly switching between different antennas in the array, i.e., it is the primary function of the random integer generator. 
In this way, the eavesdropper cannot squarely distinguish the phase artificially generated by the \ac{DM} system from the phase modulation. 
Therefore, the system may provide a certain level of security for the secret information while transmitting via only one antenna at each given time; the other antennas of the array remain disconnected. 
Thereby, we can use a single RF chain leading to a significant hardware simplification compared to other \ac{DM} schemes \cite{Narbudowicz9674846}.
Before the data transmission phase, most wireless systems require a handshake. 
Here, the transmitter and the legitimate receiver perform a handshake through a reference antenna $n_{\text{R}}$, which is not used later for data transmission.
Hereafter, we will assume that the antenna array is composed of $N+1$ antennas in such a way that we have transmitting antenna $n \in \{1,2,\dots,N\}$; in the sequel,
when referring to the antenna array, we allude to the $N$ antennas usable in the transmission phase after the handshake.

\section{Vulnerability approach}
\label{sec: VA}
%
In this section, we briefly recall the concept of vulnerability as a measure of the information leakage proposed by Smith in \cite{Smith10.1007/978-3-642-00596-1_21} (closely related to Bayes risk).
The measure of information leakage or vulnerability --- which is, in a way, the opposite of uncertainty --- uses Rényi's \textit{min-entropy} rather than Shannon entropy.
%

In secure information flow analysis, the question that arises is whether a broadcast transmission could leak information about confidential data to unintended receivers. 
If there is indeed an information leakage, then the main concern is how to measure it.
Intuitively, one can expect that ``\textit{initial uncertainty = information leaked + remaining uncertainty}''. 
This claim plainly suggests Shannon's traditional mutual information as the measure. 
However, \cite{Smith10.1007/978-3-642-00596-1_21} suggests that Shannon's mutual information may be inadequate to measure information leakage.
A more efficient way to measure information leakage would be based on the notion of \textit{vulnerability} and \textit{min-entropy}~\cite{Smith10.1007/978-3-642-00596-1_21}.
We briefly revise these concepts here.

Let the triple $(\mathcal{S}, \mathcal{O}, \mathbf{C})$ represents a discrete channel, in which $\mathcal{S}$ is a finite set of secret input values, $\mathcal{O}$ is a finite set of observable output values, and $\mathbf{C}$ is a $|\mathcal{S}| \times |\mathcal{O}|$ channel matrix, such that the element $c_{i,j}$ in the $i$th row and the $j$th column represents the conditional probability of obtaining the $i$th element of $\mathcal{O}$ in the output given that the input is the $j$th element of $\mathcal{S}$. 
Assuming any \textit{a priori} \ac{pmf} $p_S(s)$ on $\mathcal{S}$, we have the \ac{RV} $S$ that represents secret input values.
Furthermore, taking $p_S(s)$ and $\mathbf{C}$, we can define the output \ac{RV} $O$ with a \ac{pmf} given by $p_O(o)=\sum_{s} p_O(o|s) p_S(s)$.

\begin{definition}[Vulnerability]\label{def:Vul}
    Suppose a potential eavesdropper $\mathcal{A}$ wishes to guess the value of $S$ \textit{in one try} and, in the worst-case assumption, admit that $\mathcal{A}$ knows $p_S(s)$ and $\mathbf{C}$; therefore, the \textit{a priori} vulnerability is mathematically formulated as
    \begin{equation}\label{eq: a pri vul}
        V(S) = \max_{s \in \mathcal{S}} p_S(s),
    \end{equation}
    and the \textit{a posteriori} vulnerability is given by
    \begin{IEEEeqnarray}{lCl}\label{eq: a pos vul}
        V(S|O) &=& \sum_{o \in \mathcal{O}} p_O(o) \max_{s \in \mathcal{S}}p_S(s|o) \nonumber \\
        &=& \sum_{o \in \mathcal{O}} \max_{s \in \mathcal{S}}(p_O(o|s)p_S(s)),
    \end{IEEEeqnarray}
    with $0<V(S) \leq V(S|O)\leq1$.
\end{definition}

We obtain the uncertainty measure by taking the negative logarithm of vulnerability $V(\cdot)$, i.e., the Rényi’s \textit{min-entropy}~\cite{Smith10.1007/978-3-642-00596-1_21,Renyi1960}.
Thus, we have the following definitions.
\begin{definition}[Initial Uncertainty]
    The initial uncertainty is given by
    \begin{equation}\label{eq: IU}
        H_{\infty}(S) = -\log_2(V(S)).
    \end{equation}
\end{definition}
\begin{definition}[Remaining Uncertainty]
    The remaining uncertainty is given by
    \begin{equation}\label{eq: RU}
        H_{\infty}(S|O) = -\log_2(V(S|O)).
    \end{equation}
\end{definition}
\begin{definition}[Information Leakage]\label{def: IL}
    The information leakage is given by the difference between initial uncertainty and remaining uncertainty. In mathematical terms, we have
    \begin{IEEEeqnarray}{lCl}\label{eq: IL}
        I_{\infty}(S;O) = H_{\infty}(S)-H_{\infty}(S|O)=\log_2 \left( \frac{V(S|O)}{V(S)} \right).
    \end{IEEEeqnarray}
\end{definition}
\cref{def: IL} reveals how vulnerable $S$ is to one guess (in one try), given that eavesdropper $\mathcal{A}$ knows $O$.
On the other hand, one can assume multiple guesses for $\mathcal{A}$; hence, $I_{\infty}(S;O)$ might become inadequate~\cite{Boris5552653}.
Nevertheless, we can draw a tractable bound for at most $g$-guesses made by $\mathcal{A}$ as a factor of $g$.
Thus, we have $V_g(S) \leq g V(S)$ and $V_g(S|O) \leq g V(S|O)$, with $V_g(S)$ being the vulnerability for $g$-guesses~\cite{Boris5552653}.

An optimal strategy certainly is to guess the value of $S$ according to a decreasing order of the probabilities of $S$ --- \cref{def:Vul} assume exactly the same if only one try is allowed.
Let $p_1,p_2,\dots,p_{|\mathcal{S}|}$ be the probabilities of $S$ such that $p_1 \geq p_2 \geq \dots \geq p_{|\mathcal{S}|}$.
Thus, the expected number of guesses required to guess $S$ optimally, i.e., the \textit{guessing entropy} of $S$, is given by
\begin{IEEEeqnarray}{lCl}\label{eq: G(S)}
    G(S) = \sum_{l=1}^{|\mathcal{S}|}lp_l.
\end{IEEEeqnarray}
%
%
Massey's guessing entropy bound in \cite{Massey394764} establishes a lower bound on the expected number of guesses required to find $S$ given $O$ as a function of Shannon's conditional entropy $H(S|O)$. 
It claims that the guessing entropy of $S$ given $O$ meets 
\begin{IEEEeqnarray}{lCl}
    G(S|O) \geq 2^{H(S|O) - 2} +1. 
\end{IEEEeqnarray}
However, this expected number of guesses may be arbitrarily large even when $H(S|O)$ --- and, accordingly, Massey's lower bound --- is arbitrarily small. 
Moreover, \cite{Smith10.1007/978-3-642-00596-1_21} indicates through examples that $G(S)$ or $G(S|O)$ can be high even when the attacker can guess $S$ reasonably well in one try.
A similar conclusion can be drawn by assuming $p_1$ and $|\mathcal{S}|$ large enough in \eqref{eq: G(S)}.
%
In contrast, since the conditional min-entropy $H_{\infty}(S|O)$ satisfies
\begin{equation}
    V(S|O) = 2^{-H_{\infty}(S|O)},
\end{equation}
it provides immediate security guarantees about the vulnerability of $S$ given $O$.
\section{Design}
\label{sec: Design}
This section deals with the theoretical limits of the underlying system. 
In addition, we quantify the vulnerability of the proposed system and discuss optimal characteristics for the eavesdropper and array of transmitting antennas.
In what follows, \cref{subs: Easves} assesses the way an eavesdropper can best usurp information from the transmitted signal, while \cref{subs: ArryaAnt} discusses the design of an antenna array that minimizes the information leakage.
\subsection{Eavesdropper}\label{subs: Easves}
Suppose eavesdropper $\mathcal{A}$ is in the direction $\phi_{\text{E}}$ with respect to the antenna array.
It follows that the phase of the local oscillator of $\mathcal{A}$ after the handshake is given by
\begin{equation}
    \rho_{\text{LO}} = \theta_{\text{DM}}(\phi_{\text{Bob}},n_{\text{R}}) + \theta_{n_{\text{R}}}(\phi_{\text{E}}) + \mathbb{E}[\theta_{\text{Ch}}],
\end{equation}
with $\mathbb{E}[\cdot]$ denoting the expectation operator. 
If the channel is \ac{AWGN}, then we have $\mathbb{E}[\theta_{\text{Ch}}] = 0$.
The phase shift of symbols received by $\mathcal{A}$ after the handshake (confidential data) is given by
\begin{equation}
    \rho_{\text{S}} = \theta_{\text{DM}}(\phi_{\text{Bob}},n) + \theta_{n}(\phi_{\text{E}}) + \theta_{\text{Ch}},
\end{equation}
and the phase error in $\mathcal{A}$ can be written as 
\begin{IEEEeqnarray}{lCl}\label{eq: Phase Error A}
    \rho &=& \rho_{\text{S}}-\rho_{\text{LO}} \nonumber \\
    &=& \theta_{\text{DM}}(\phi_{\text{Bob}},n) + \theta_{n}(\phi_{\text{E}}) + \theta_{\text{Ch}} \nonumber \\
    &&- \theta_{\text{DM}}(\phi_{\text{Bob}},n_{\text{R}}) - \theta_{n_{\text{R}}}(\phi_{\text{E}}) - \mathbb{E}[\theta_{\text{Ch}}].
\end{IEEEeqnarray}
Finally, the complex received symbol is given by 
\begin{IEEEeqnarray}{lCl}\label{eq: RecSimb r}
    r = |m + w| \exp(i (\rho + \beta)),
\end{IEEEeqnarray}
where $m$ represents the complex transmitted symbol with phase $\beta$, $w$ is a complex Gaussian \ac{RV} with zero mean and variance $\sigma^2$, and $i=\sqrt{-1}$.

The eavesdropper $\mathcal{A}$ must maximize the information leakage $I_{\infty}(S;O)$, with $S$ being the secret symbol whose complex representation is given by $m$, and $O$ is the received symbol.
Note that the mapping of $S$ into $m$ depends on the adopted modulation.
The eavesdropper's \textit{demodulation function} (or demodulation map) can be defined as $d: \mathbb{C} \rightarrow  \mathcal{O}$, and, thus, the received symbol can be written as $o=d(r)$.
Thereby, the task of eavesdropper $\mathcal{A}$ is to choose a function $d(\cdot)$ that maximises the chance of guessing $S$; this can be mathematically formulated as
\begin{IEEEeqnarray}{lCl}
        \max_{d(r)} I_{\infty}(S;O)\\
        \text{s.t.} \; |\mathcal{S}| \leq |\mathcal{O}| \leq |\mathcal{S}|N, \nonumber
\end{IEEEeqnarray}
or, equivalently, 
\begin{IEEEeqnarray}{lCl}
        \max_{d(r)} V(S|O) \\
        \text{s.t.} \; |\mathcal{S}| \leq |\mathcal{O}| \leq |\mathcal{S}|N, \nonumber 
\end{IEEEeqnarray}
where the constraint $|\mathcal{S}| \leq |\mathcal{O}| \leq |\mathcal{S}|N$ is explained as follows.
In fact, the totality of symbols sent through the channel by the transmitter is less than or equal to $|\mathcal{S}|N$; 
therefore, a set of output symbols $\mathcal{O}^*$ with $|\mathcal{O}^*|>|\mathcal{S}|N$ has spare elements.
%
%
$|\mathcal{S}| \leq |\mathcal{O}|$ can be trivially understood, as the output set cannot be smaller than the input set without loss of information in this particular context.
As will become clear throughout the remainder of the paper, we commonly have $|\mathcal{O}| = |\mathcal{S}|N$.
%

\begin{lemma}
    \label{lem: Max V(S|O)}
    Assume that $p_{R,S}(r,s)$ is the joint \ac{pdf} of the complex received symbol $R$ and secret input values $S$.
    Let 
    $\mathcal{P}_{s}$ be a complex plane, such that $p_{R,S}(r,s)>p_{R,S}(r,s^*) \, \forall \, r \in \mathcal{P}_{s}$ and $s \neq s^*$. 
    Then, $V(S|O)$ is maximized if and only if $d(r)$ maps all $r \in \mathcal{P}_{s}$ to the same observable output value $o$.  
\end{lemma}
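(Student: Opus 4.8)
The plan is to characterize the maximizer of $V(S|O)$ directly from the closed form of the \emph{a posteriori} vulnerability in \eqref{eq: a pos vul}, namely $V(S|O) = \sum_{o \in \mathcal{O}} \max_{s \in \mathcal{S}} \big(p_O(o|s) p_S(s)\big)$. Since the demodulation map $d(\cdot)$ is deterministic, every observable output $o$ corresponds to the preimage region $d^{-1}(o) \subseteq \mathbb{C}$, and these regions partition the complex plane. The key identity I would establish first is that, for a fixed $s$, the channel law factors as $p_O(o|s) = \int_{d^{-1}(o)} p_{R|S}(r|s)\, dr$, so that $p_O(o|s) p_S(s) = \int_{d^{-1}(o)} p_{R,S}(r,s)\, dr$. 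Substituting this into \eqref{eq: a pos vul} gives
\begin{equation}\label{eq: Vso_as_integral}
V(S|O) = \sum_{o \in \mathcal{O}} \max_{s \in \mathcal{S}} \int_{d^{-1}(o)} p_{R,S}(r,s)\, dr.
\end{equation}

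The second step is the pointwise upper bound. For any region $A \subseteq \mathbb{C}$ and any fixed $s$ we have $\int_A p_{R,S}(r,s)\,dr \le \int_A \max_{s' \in \mathcal{S}} p_{R,S}(r,s')\,dr$, so for each $o$,
\begin{equation}\label{eq: pointwise_bound}
\max_{s \in \mathcal{S}} \int_{d^{-1}(o)} p_{R,S}(r,s)\, dr \le \int_{d^{-1}(o)} \max_{s' \in \mathcal{S}} p_{R,S}(r,s')\, dr,
\end{equation}
and summing over $o$ (the regions $d^{-1}(o)$ forming a partition of $\mathbb{C}$) yields $V(S|O) \le \int_{\mathbb{C}} \max_{s \in \mathcal{S}} p_{R,S}(r,s)\, dr$, which is a bound independent of $d(\cdot)$. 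This upper bound is exactly $\sum_{s} \int_{\mathcal{P}_s} p_{R,S}(r,s)\, dr$, by the defining property of the sets $\mathcal{P}_s$ (on $\mathcal{P}_s$ the function $p_{R,S}(r,s)$ is the strict pointwise maximum over all competitors), up to the measure-zero boundary set where ties occur, which does not affect the integral.

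The third step is to show this bound is attained precisely when $d(\cdot)$ is constant on each $\mathcal{P}_s$. For the ``if'' direction: if $d$ maps all of $\mathcal{P}_s$ to a single output $o_s$ (and the $o_s$ are distinct across $s$), then $d^{-1}(o_s) \supseteq \mathcal{P}_s$; because the $\mathcal{P}_s$ already cover $\mathbb{C}$ up to a null set, we in fact get $d^{-1}(o_s) = \mathcal{P}_s$ almost everywhere, and then $\int_{d^{-1}(o_s)} p_{R,S}(r,s)\,dr = \int_{\mathcal{P}_s} \max_{s'} p_{R,S}(r,s')\,dr$, so \eqref{eq: pointwise_bound} holds with equality for every $o$ and the global bound is met. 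For the ``only if'' direction: equality in \eqref{eq: pointwise_bound} for the output $o$ forces, for the maximizing $s$, that $p_{R,S}(r,s) = \max_{s'} p_{R,S}(r,s')$ for almost every $r \in d^{-1}(o)$; by the strict-inequality defining property of $\mathcal{P}_s$, this means $d^{-1}(o)$ is contained (mod null sets) in a single $\mathcal{P}_s$, and running this over all $o$ shows each $\mathcal{P}_s$ cannot be split across two outputs. I would also note here that the cardinality constraint $|\mathcal{S}| \le |\mathcal{O}| \le |\mathcal{S}|N$ is compatible with this optimal map, since it produces $|\mathcal{O}| = |\mathcal{S}|$ in the cleanest case and coarsenings do not help.

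The main obstacle I anticipate is handling the tie set $\{r : p_{R,S}(r,s) = p_{R,S}(r,s') \text{ for some } s \neq s'\}$ carefully: the sets $\mathcal{P}_s$ as defined with a \emph{strict} inequality need not cover all of $\mathbb{C}$, so I must argue that the residual set has Lebesgue measure zero (which holds here because $p_{R,S}$ is built from Gaussian densities in \eqref{eq: RecSimb r}, whose level-set differences are analytic and hence have measure-zero zero sets) and that $d$'s behavior on that null set is irrelevant to the value of $V(S|O)$. The rest is a clean ``pointwise optimization under an integral'' argument, so once the null-set bookkeeping is pinned down the equivalence follows directly.
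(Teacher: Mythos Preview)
Your proposal is correct and follows essentially the same route as the paper: both express $V(S|O)$ as $\sum_o \max_s \int_{d^{-1}(o)} p_{R,S}(r,s)\,dr$, bound each summand by $\int_{d^{-1}(o)} \max_s p_{R,S}(r,s)\,dr$, sum over the partition to obtain the $d$-independent upper bound $\int_{\mathbb{C}} \max_s p_{R,S}(r,s)\,dr$, and then characterize equality by forcing each preimage region to coincide with a single $\mathcal{P}_s$. The only minor difference is presentational---the paper makes the inequality explicit by splitting $\mathcal{R}_o$ into $\mathcal{R}_o \cap \mathcal{P}_{s_o}$ and $\mathcal{R}_o \setminus \mathcal{P}_{s_o}$, whereas you invoke the generic ``max of integrals $\le$ integral of max'' bound---and you are somewhat more careful than the paper about the measure-zero tie set and the ``only if'' direction.
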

\begin{proof}
    Let the complex plane $\mathcal{R}$ be the sample space of $R$, and assume that $d(r)=o$ \; $\forall r \in \mathcal{R}_o$ such that $\mathcal{R}_o  \subset \mathcal{R}$ and $\mathcal{R}_o \cap \mathcal{R}_{i} = \emptyset \; \forall i\neq o$. 
    Given the characteristics of $d(r)$, we can then write
    \begin{IEEEeqnarray*}{lCl}
        p_O(o|s) = \int_{\mathcal{R}_o} p_R(r|s) \text{d}r,
    \end{IEEEeqnarray*}
    and
    \begin{IEEEeqnarray*}{lCl}
        p_O(o|s)p_S(s) = \int_{\mathcal{R}_o} p_{R,S}(r,s) \text{d}r.
    \end{IEEEeqnarray*}
    
    Now, let $s_o$ be the value of $s$ which maximizes a term of the following summation
    \begin{IEEEeqnarray*}{lCl}
        \sum_{o \in \mathcal{O}} \max_{s \in \mathcal{S}}(p_O(o|s)p_S(s)).
    \end{IEEEeqnarray*}
    Thus, we can write
    \begin{IEEEeqnarray*}{lCl}
        \max_{s \in \mathcal{S}} \left( \int_{\mathcal{R}_o} p_{R,S}(r,s) \text{d}r \right) = \\
        =\int_{\mathcal{R}_o \cap \mathcal{P}_{s_o}} p_{R,S}(r,s_o) \text{d}r + \int_{\mathcal{R}_o - \mathcal{P}_{s_o}} p_{R,S}(r,s_o) \text{d}r \\ \leq  \sum_{a \in \mathcal{S}} \int_{\mathcal{R}_o \cap \mathcal{P}_{a}} p_{R,S}(r,a) \text{d}r = \int_{\mathcal{R}_o} \max_{s \in \mathcal{S}} p_R(r|s)p_S(s) \text{d}r.
    \end{IEEEeqnarray*}
    And it follows that 
    \begin{IEEEeqnarray}{lCl}
        \sum_{o \in \mathcal{O}} \max_{s \in \mathcal{S}}(p_O(o|s)p_S(s)) \leq \int_\mathcal{R} \max_{s \in \mathcal{S}} p_R(r|s)p_S(s) \text{d}r
    \end{IEEEeqnarray}
    with the equality condition being met only if $\mathcal{R}_o = \mathcal{P}_{s_o} \; \forall o \in \mathcal{O}$.
\end{proof}

\begin{theorem}
    \label{th: optmal d(r)}
    %
    Let $\mathcal{P}_{s,n}$ be a complex plane, such that $p_{R,S,n}(r,s,n)>p_{R,S,n}(r,s^*,n^*) \, \forall \, r \in \mathcal{P}_{s,n}$ and $s \neq s^* \lor n \neq n^*$ with $p_{R,S,n}(r,s,n)$ being the joint \ac{pdf} of the complex received symbol $R$, secret input values $S$, and transmitting antenna $n$.
    The eavesdropper $\mathcal{A}$ can maximize the system vulnerability by designing the demodulation function $d_{\mathcal{A}}(r)$ in a such way that for all $r \in \mathcal{P}_{s,n}$, it is mapped to the same output $o$.
\end{theorem}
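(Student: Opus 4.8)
The plan is to deduce \cref{th: optmal d(r)} from \cref{lem: Max V(S|O)} by promoting the ``secret'' from the symbol $S$ alone to the pair $(S,n)$ consisting of the symbol and the (unknown, randomly chosen) transmitting antenna. The motivation is that, by \eqref{eq: RecSimb r} and \eqref{eq: Phase Error A}, the law of the received symbol $R$ observed by $\mathcal A$ is driven by \emph{both} $S$ and $n$, so the channel faced by $\mathcal A$ effectively has input alphabet $\mathcal S\times\{1,\dots,N\}$ and joint input--output density $p_{R,S,n}$. With this identification, the region $\mathcal P_{s,n}$ in the statement is precisely the ``dominance region'' $\mathcal P_{\tilde s}$ of \cref{lem: Max V(S|O)} for the composite value $\tilde s=(s,n)$.

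First I would rewrite the a posteriori vulnerability \eqref{eq: a pos vul} by marginalising over the antenna: writing $\mathcal R_o=d_{\mathcal A}^{-1}(o)$, one has $p_O(o\mid s)p_S(s)=\int_{\mathcal R_o}\sum_n p_{R,S,n}(r,s,n)\,\mathrm dr$, hence $V(S\mid O)=\sum_{o\in\mathcal O}\max_{s}\int_{\mathcal R_o}\sum_n p_{R,S,n}(r,s,n)\,\mathrm dr$. Next I would replay, essentially verbatim, the integral-splitting argument from the proof of \cref{lem: Max V(S|O)}, but now splitting each cell $\mathcal R_o$ against the \emph{refined} family $\{\mathcal P_{s,n}\}$: pushing the maximum inside the integral region by region and recombining yields $V(S\mid O)\le\int_{\mathcal R}\max_{s}\sum_n p_{R,S,n}(r,s,n)\,\mathrm dr=\int_{\mathcal R}\max_s p_{R,S}(r,s)\,\mathrm dr$, with equality exactly when every $\mathcal R_o$ agrees, up to a null set, with a union of the pieces on which a single pair $(s,n)$ dominates --- that is, when $d_{\mathcal A}$ is constant on each $\mathcal P_{s,n}$. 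Taking $d_{\mathcal A}$ to send each $\mathcal P_{s,n}$ to its own output attains the bound (and gives $|\mathcal O|=|\mathcal S|N$, consistent with the remark preceding the lemma), which settles the ``if'' and ``only if'' directions together.

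It then remains to tie this back to the eavesdropper's actual objective, namely guessing $S$ rather than $(S,n)$. Here I would argue that no demodulation can push $V(S\mid O)$ above $\int_{\mathcal R}\max_s p_{R,S}(r,s)\,\mathrm dr$ --- this is \cref{lem: Max V(S|O)} applied to $S$ directly --- while the $\mathcal P_{s,n}$-constant demodulation meets exactly this value; since a guess of $S$ is obtained from a guess of $(S,n)$ simply by discarding the antenna estimate, a demodulation optimal for the pair is also optimal for $\mathcal A$.

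The step I expect to be the main obstacle is precisely this last reconciliation: showing that resolving the composite quantity $(S,n)$ at the finest level does not overshoot what is needed to resolve $S$, i.e., that the antenna-aware dominance cells $\mathcal P_{s,n}$ are compatible with (refine) the symbol-only dominance cells, so that merging the $N$ cells $\{\mathcal P_{s,n}\}_{n}$ collapses to a symbol-MAP decision without loss. Establishing this compatibility for an arbitrary joint density $p_{R,S,n}$ is delicate; I would expect to lean on the structure of the present system --- constant-modulus signalling together with antenna- and symbol-independent amplitude and phase-noise statistics, so that $p_{R,S,n}$ depends on $r$ only through $\arg r$ --- rather than claiming it in full generality.
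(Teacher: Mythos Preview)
Your high-level route is exactly the paper's: promote the secret from $S$ to the pair $(S,n)$ and invoke \cref{lem: Max V(S|O)} with the composite input. The paper's own proof is in fact briefer than your proposal --- it introduces the modulation map $c_\phi:\mathcal S\times\{1,\dots,N\}\to\mathbb C$, defines the transmitted symbol $M$ with $p_M(m)=p_S(s)p_n(n)$, and then simply says ``apply \cref{lem: Max V(S|O)} with $M$ as the secret input.'' In particular, the paper does \emph{not} address the reconciliation you flag in your last paragraph.

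That concern is legitimate and shows more care than the paper's argument. \cref{lem: Max V(S|O)} applied to $M$ yields that a demodulation constant on each $\mathcal P_{s,n}$ maximises $V(M\mid O)$, whereas the eavesdropper's objective is $V(S\mid O)$; these agree only when the family $\{\mathcal P_{s,n}\}$ refines $\{\mathcal P_s\}$, which can fail for arbitrary joint densities (take two symbols, two antennas, and a point $r$ with $p(r,s_1,n_1)=0.4$, $p(r,s_1,n_2)=0.01$, $p(r,s_2,n_1)=p(r,s_2,n_2)=0.3$). One small slip to watch: in your middle paragraph the equality case you write down is the one for the $(S,n)$-MAP decision, not the $S$-MAP; the correct equality condition for $V(S\mid O)=\int_{\mathcal R}\max_s p_{R,S}(r,s)\,\mathrm dr$ is that each cell $\mathcal R_o$ lie within a single $\mathcal P_s$, which is precisely why the refinement question cannot be sidestepped. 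Your instinct to close it via the specific structure here (independent $S$ and $n$, antenna acting only by phase rotation so that the likelihoods $p_R(r\mid s,n)$ are rotated copies of a common template) is the right direction, and goes beyond what the paper supplies.
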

\begin{proof}
    One can easily extend the proof of \cref{lem: Max V(S|O)} by considering the secret input values with a combination of $s$ and $n$ as follows.
    Let the modulation function be $c:\mathcal{S} \times \{1,2,\dots,N\} \to \mathbb{C}$. 
    Then, we can define the transmitting symbol $M$ as a \ac{RV} given by $m=c_{\phi}(s,n)$ for direction $\phi$. 
    Furthermore, the sample space of $M$ contains at most $|\mathcal{S}|N$ elements, and the \ac{pmf} of $M$ is given by $p_M(m)=p_S(s)p_n(n)$.
    Thus, we can apply \cref{lem: Max V(S|O)} by considering $M$ as the secret input.
\end{proof}
%
%
One way to interpret \cref{th: optmal d(r)} is as follows: the vulnerability maximization for $\mathcal{A}$ refers to finding the optimal decision areas (or boundaries) for a finite output alphabet. 
For a practical approach, the use of \cref{th: optmal d(r)} solely requires knowledge of the joint \ac{pdf} of the complex received symbol $R$, the secret input values $S$, and the transmitting antenna $n$. 
%
%
Hereafter, we will assume that the eavesdropper has precise knowledge about $p_{R,S,n}(r,s,n)$ in order to establish the worst-case scenario regarding vulnerability.
Note that any mismatch between the prior $p_{R,S,n}(r,s,n)$ at $\mathcal{A}$ and the correct $p_{R,S,n}(r,s,n)$ can lead to information leakage less than or equal to that obtained with perfect prior knowledge of $p_{R,S,n}(r,s,n)$.

One can note that $d_{\mathcal{A}}(\cdot)$ is always a non-injective surjective function.
Furthermore, for a sufficiently high \ac{SNR}, we have a bijective function from $S$ and $n$ to $O$; therefore, the information leakage would be maximum given that $\mathcal{A}$ would be able to recover precisely $S$.
\subsection{Transmitting antenna array} \label{subs: ArryaAnt}
Here, we turn turn our attention to the transmitter, and specifically the transmitting antenna array.
In this subsection, we assess how the arrangement of transmitting antennas impacts the system vulnerability.
Assuming a worst case-scenario (from a security viewpoint) wherein the eavesdropper is in its optimal operation, we provide some guidelines on the design of an antenna array that minimizes the information leakage.

\begin{lemma}
    \label{lem: V(S|O)}
    For the eavesdropper $\mathcal{A}$ that uses the demodulation function $d_{\mathcal{A}}(r)$ according to \cref{th: optmal d(r)}, the vulnerability of the system is then given by
    \begin{IEEEeqnarray}{lCl}
        V(S|O) = N-\sum_{o \in \mathcal{O}}\epsilon_{o,s_o},
    \end{IEEEeqnarray}
    where $\epsilon_{o,s}= \Pr[d_{\mathcal{A}}(r) \neq o\cap S=s]$ is the error probability and $s_o=\arg \max_{s\in \mathcal{S}}p_O(o|s)p_S(s)$. 
\end{lemma}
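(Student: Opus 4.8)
The plan is to unfold \cref{def:Vul} and turn each term of the a posteriori vulnerability into the probability of a joint event. By definition,
\[
V(S|O)=\sum_{o\in\mathcal{O}}\max_{s\in\mathcal{S}}\bigl(p_O(o|s)p_S(s)\bigr)=\sum_{o\in\mathcal{O}}p_O(o|s_o)p_S(s_o),
\]
where $s_o$ is the maximiser named in the statement. Since $O=d_{\mathcal{A}}(R)$, the $o$-th term equals $\Pr[d_{\mathcal{A}}(R)=o\cap S=s_o]$, and splitting the event $\{S=s_o\}$ according to whether $d_{\mathcal{A}}(R)=o$ or not gives $\Pr[d_{\mathcal{A}}(R)=o\cap S=s_o]=p_S(s_o)-\epsilon_{o,s_o}$. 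Summing over $o$ therefore yields $V(S|O)=\sum_{o\in\mathcal{O}}p_S(s_o)-\sum_{o\in\mathcal{O}}\epsilon_{o,s_o}$, so the lemma reduces to the identity $\sum_{o\in\mathcal{O}}p_S(s_o)=N$.

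To establish that identity I would invoke \cref{th: optmal d(r)}: the optimal $d_{\mathcal{A}}$ assigns to every region $\mathcal{P}_{s,n}$ its own output, so $\mathcal{O}$ is in one-to-one correspondence with the pairs $(s,n)\in\mathcal{S}\times\{1,\dots,N\}$ and $|\mathcal{O}|=|\mathcal{S}|N$ (the regime the paper notes to be the typical one). For the output $o$ attached to $\mathcal{P}_{s,n}$ one has $p_O(o|s')p_S(s')=\int_{\mathcal{P}_{s,n}}p_{R,S}(r,s')\,\text{d}r$, and since on $\mathcal{P}_{s,n}$ the joint density $p_{R,S,n}(r,s,n)$ dominates every $p_{R,S,n}(r,s^{*},n^{*})$, marginalising out the transmitting antenna leaves the true label $s$ as the maximiser, i.e. $s_o=s$. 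Consequently each $s\in\mathcal{S}$ appears as $s_o$ for exactly the $N$ outputs indexed by $(s,1),\dots,(s,N)$, whence $\sum_{o\in\mathcal{O}}p_S(s_o)=\sum_{s\in\mathcal{S}}N\,p_S(s)=N$, and the claimed formula $V(S|O)=N-\sum_{o\in\mathcal{O}}\epsilon_{o,s_o}$ follows.

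The bookkeeping of the first paragraph is routine; the delicate point is the last step of the second, namely that the regions produced by \cref{th: optmal d(r)} are correctly labelled so that every secret value ``wins'' exactly $N$ outputs. I expect this to be the main obstacle, because in full generality marginalising the antenna index can move the maximiser of $\int_{\mathcal{P}_{s,n}}p_{R,S}(r,\cdot)\,\text{d}r$ away from $s$; the cleanest way to close the gap is to argue in the high-\ac{SNR} regime discussed immediately after \cref{th: optmal d(r)}, where each $\mathcal{P}_{s,n}$ is concentrated around the noiseless received constellation point, so that on that region $p_{R,S}(r,s)$ is effectively fed only by the active antenna and dominates $p_{R,S}(r,s^{*})$ for every competing $s^{*}$. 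With that observation in place the counting, and hence the identity $\sum_{o}p_S(s_o)=N$, goes through and the proof is complete.
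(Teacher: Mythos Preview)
Your approach is essentially identical to the paper's: the paper sets $\epsilon'_{o,s}=\Pr[d_{\mathcal{A}}(r)\neq o\mid S=s]$, writes $p_O(o|s)=1-\epsilon'_{o,s}$, substitutes into the definition of $V(S|O)$, and passes in one line to $V(S|O)=\sum_{o}(1-\epsilon'_{o,s_o})p_S(s_o)=N-\sum_{o}\epsilon_{o,s_o}$. The identity $\sum_{o}p_S(s_o)=N$ that you isolate as the delicate step is simply asserted in the paper without justification, so your treatment is in fact more careful than the original.
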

\begin{proof}
    Let $\epsilon'_{o,s}= \Pr[d_{\mathcal{A}}(r) \neq o|S=s]$, then we have that
    \begin{IEEEeqnarray}{lCl}
        \label{eq: lem pO}
        p_O(o|s) = (1-\epsilon'_{o,s}).
    \end{IEEEeqnarray}
    Plugging \eqref{eq: lem pO} into \eqref{eq: a pos vul}, we obtain
    \begin{IEEEeqnarray}{lCl}
    \label{eq: lem V}
        V(S|O) &=& \sum_{o \in \mathcal{O}}(1-\epsilon'_{o,s_o}) p_S(s_o) \nonumber \\ 
        &=& N-\sum_{o \in \mathcal{O}}\epsilon_{o,s_o},
    \end{IEEEeqnarray}
    with $\epsilon_{o,s_o} = \epsilon'_{o,s_o} p_S(s_o)$ and $s_o=\arg \max_{s\in \mathcal{S}}p_O(o|s)p_S(s)$.
\end{proof}

\cref{lem: V(S|O)} reveals the dependency relationship between vulnerability and error probability $\epsilon_{o,s}$.
Let us now take a closer look at this error probability, $\epsilon_{o,s}$, which can be expressed as
\begin{IEEEeqnarray}{lCl}
\label{eq: ep ext}
    \hspace{-5mm}
    \epsilon_{o,s_o} = p_S(s_o) \left(1 - \sum_{n} \int_{\mathcal{P}_{s_o,n_o}} p_{R}(r|s_o,n) \text{d}r \, p_n(n) \right),
\end{IEEEeqnarray}
with $o=d_{\mathcal{A}}(c_{\mathcal{A}}(s_o,n_o))$ and $c_{\mathcal{A}}(\cdot,\cdot)$ being the modulation function in the direction of ${\mathcal{A}}$ defined as $c:\mathcal{S} \times \{1,2,\dots,N\} \to \mathbb{C}$ --- note that $m=c_{\mathcal{A}}(s,n)$.
Since $p_{R}(r|s_o,n)$ is 2$\sigma^2$-variance, complex Gaussian whose center is at $c_{\mathcal{A}}(s_o,n)$, one can notice that $\lim_{\sigma \to 0}V(S|O) = 1$ and $\lim_{\sigma \to 0} I_{\infty}(S;O) = H_{\infty}(S)$.
In other words, the information leakage of the system tends to its maximum value insofar as the \ac{SNR} increases.

\begin{theorem}
    \label{th: Inf lea}
    For equiprobable secret symbols and uniformly distributed phase $\theta_{An}$ (that is, the antenna array in conjunction with the Random Integer Generator produces a uniform phase shift), the information leakage to an eavesdropper in the direction $\phi \neq \phi_{\text{Bob}}$ is given by 
    \begin{IEEEeqnarray}{lCl}
        I_{\infty}(S;O) = \log_2 \left( \sum_{l=1}^{L} Q_1\left(\frac{m_l}{\sigma },\frac{r_{l-1}}{\sigma},\frac{r_l}{\sigma }\right) \right),
    \end{IEEEeqnarray}
    with the set of non-repeating elements in ascending order $\{m_1,m_2,\dots,m_L \}$ being the modules of complex symbols $m$, $\{r_1,r_2,\dots,r_{L-1} | r_1<r_2<\dots<r_{L-1}\}$ are the radii of the circular decision regions of the output symbols $O$, $r_0=0$, $r_L=\infty$, $Q_m(a,b)=\int_b^{\infty } \frac{1}{2} x \left(\exp  \left(-\left(a^2+x^2\right)\right)\right) \left(\frac{x}{a}\right)^{m-1} I_{m-1}(a x) \, \text{d}x$ is the Marcum Q-function~\cite{marcumq}, and $Q_m(a,b_0,b_1)=Q_m\left(a,b_0\right)-Q_m\left(a,b_1\right)$.
\end{theorem}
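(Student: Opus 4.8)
The plan is to start from \cref{def: IL}, $I_{\infty}(S;O)=\log_2\!\big(V(S|O)/V(S)\big)$, evaluate the two vulnerabilities under the stated hypotheses, and identify the result with the Marcum-$Q$ sum. Because the secret symbols are equiprobable, $V(S)=\max_{s}p_S(s)=1/|\mathcal{S}|$ is immediate, so the whole task reduces to computing $V(S|O)$ for the optimal eavesdropper of \cref{th: optmal d(r)}.

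First I would reduce everything to the received amplitude. Under the hypothesis that $\theta_{An}$ — hence the phase error $\rho$ in the model $r=|m+w|\exp(i(\rho+\beta))$ — is uniform on $[0,2\pi)$ and independent of $S$, the phase $\rho+\beta$ of $r$ is uniform and statistically independent of the pair $(S,|r|)$: the \ac{DM} rotation leaves $|m|$ unchanged and $w$ is circularly symmetric, so $|r|=|m+w|$ depends on $s$ only through the modulus $m_{k(s)}\in\{m_1,\dots,m_L\}$ of the constellation point carrying $s$. Thus $|r|$ is a sufficient statistic for $S$, and by \cref{th: optmal d(r)} the optimal demodulator $d_{\mathcal{A}}$ can be taken to depend on $r$ only through $|r|$, with radial-annulus decision cells. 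Conditioned on $S=s$, the amplitude $|r|$ is Rician with non-centrality $m_{k(s)}$ and scale $\sigma$, so its complementary CDF is $Q_1(m_{k(s)}/\sigma,\,t/\sigma)$ with $Q_1$ normalized as in the statement, and therefore $\Pr[\,|r|\in(a,b)\mid S=s\,]=Q_1(m_{k(s)}/\sigma,a/\sigma,b/\sigma)$ for any annulus $(a,b)$.

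Next, \cref{th: optmal d(r)} specialized to this one-dimensional setting says the optimal cells are the regions on which a single modulus class has the (strictly) largest Rician density — symbols of a common modulus being indistinguishable — because the joint density of $(|r|,S,n)$ is proportional to $f_{m_{k(s)}}(\rho)$ when $p_S$ and $p_n$ are uniform, so its maximiser depends only on the modulus class. On such a cell $A$, where some class $l(A)$ attains the pointwise maximum, dominance forces $\int_A f_{m_{l(A)}}\ge\int_A f_{m_k}$ for all $k$, so with equiprobable symbols (and every modulus class non-empty)
\[
V(S|O)=\sum_{A}\max_{s\in\mathcal{S}}\big(p_O(o_A|s)\,p_S(s)\big)=\frac{1}{|\mathcal{S}|}\sum_{A}\int_A\max_k f_{m_k}(\rho)\,\text{d}\rho=\frac{1}{|\mathcal{S}|}\int_0^{\infty}\max_k f_{m_k}(\rho)\,\text{d}\rho,
\]
the last equality because the cells partition $(0,\infty)$. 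To turn the right-hand integral into the stated sum I would invoke the monotone-likelihood-ratio property of the Rician family in its non-centrality parameter — $\rho\mapsto f_{m_{k'}}(\rho)/f_{m_{k}}(\rho)$ is increasing when $m_{k'}>m_{k}$, which follows from the monotonicity of $x\mapsto I_{0}(\alpha x)/I_{0}(\beta x)$ for $\alpha>\beta$ — so that the argmax cells are exactly the nested annuli $(r_{l-1},r_l)$, $l=1,\dots,L$, with $0=r_0<r_1<\dots<r_{L-1}<r_L=\infty$, $r_l$ the crossing radius of $f_{m_l}$ and $f_{m_{l+1}}$, and $m_l$ the dominant class on the $l$-th annulus. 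Then $\int_0^{\infty}\max_k f_{m_k}=\sum_{l=1}^{L}\int_{r_{l-1}}^{r_l}f_{m_l}=\sum_{l=1}^{L}Q_1(m_l/\sigma,r_{l-1}/\sigma,r_l/\sigma)$; dividing $V(S|O)$ by $V(S)=1/|\mathcal{S}|$ and taking $\log_2$ yields the claim.

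The crux — and the step I expect to be the main obstacle — is the last one: showing that the vulnerability-optimal decision regions are precisely these $L$ nested annuli (neither coarser nor finer) and that $m_l$ simultaneously maximizes the Rician density pointwise and the annulus probability on the $l$-th annulus. This is exactly what the Rician monotone-likelihood-ratio property supplies; a complementary check on the optimality of the annular partition is the inequality $\sum_{o}\max_k\int_{A_o}f_{m_k}\le\int_0^{\infty}\max_k f_{m_k}$, with equality iff every $A_o$ lies inside a single argmax cell, which rules out merging cells and shows refining them is harmless. A secondary, purely bookkeeping point is to reconcile the Rician complementary-CDF normalization with the particular integral definition of $Q_m$ fixed in the statement (the $\sigma$-scalings and the factor $\tfrac12$).
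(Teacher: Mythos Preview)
Your proof is correct and follows essentially the same route as the paper: establish that the phase error is uniform so that $|r|$ is a sufficient statistic, invoke \cref{lem: Max V(S|O)}/\cref{th: optmal d(r)} to obtain circular (annular) decision regions, compute the Rician annulus probabilities as Marcum $Q$-function differences, and combine with $V(S)=1/|\mathcal{S}|$ in \cref{def: IL}. Your explicit appeal to the monotone-likelihood-ratio property of the Rician family in its non-centrality parameter --- to justify that the optimal partition consists of exactly $L$ nested annuli with $m_l$ dominant on the $l$-th --- is a step the paper's proof leaves implicit.
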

\begin{proof}
    Let $\theta_{An}[k]$ be the $k$th sample of $ \theta_{n}(\phi)$ over time; and assume that $\theta_{An}[k] \; \forall k \in \{1,2,\dots\}$ has a uniform distribution in the interval $[-\pi,\pi]$, i.e, $p_{\theta_{An}}(\theta_{An})=1/(2 \pi)$. 
    Since the phase error in the direction $\phi$ is given by $\rho = \theta_{\text{DM}}(\phi_{\text{Bob}},n) + \theta_{n}(\phi_{\text{E}}) + \theta_{\text{Ch}}-\rho_{\text{LO}}$, the \ac{pdf} of $\rho$ can be obtained as
    \begin{IEEEeqnarray}{lCl}
        p_{\rho}(\rho) &=& \sum_{\theta_{\text{DM}}}\int_{-\pi}^{\pi} p_{\rho}(\rho|\theta_{\text{DM}},\theta_{\text{Ch}} ) p_{\theta_{\text{DM}}}(\theta_{\text{DM}}) p_{\theta_{\text{Ch}}}(\theta_{\text{Ch}}) \text{d} \theta_{\text{Ch}} \nonumber \\
        &=& \sum_{\theta_{\text{DM}}}\int_{-\pi}^{\pi} \frac{1}{2 \pi} p_{\theta_{\text{DM}}}(\theta_{\text{DM}}) p_{\theta_{\text{Ch}}}(\theta_{\text{Ch}}) \text{d} \theta_{\text{Ch}} \nonumber \\
        &=& \frac{1}{2 \pi}.
    \end{IEEEeqnarray}
    %
    
    From \cref{lem: Max V(S|O)}, we have the optimal decision regions' boundaries as circles. 
    Let the radius of the decision regions be given by $\{r_1,r_2,\dots,r_{L-1} | r_1<r_2<\dots<r_{L-1}\}$, we have that 
    \begin{IEEEeqnarray*}{lCl}
        p_O(o=o_l|s) &=& \int_{r_{l-1}}^{r_l} \frac{|r|}{\sigma ^2} \exp \left(\frac{-m_l^2-|r|^2}{2 \sigma ^2}\right) I_0\left(\frac{m_l|r|}{\sigma ^2}\right) \text{d}r \\
        &=& Q_1\left(\frac{m}{\sigma },\frac{r_{l-1}}{\sigma},\frac{r_l}{\sigma }\right),
    \end{IEEEeqnarray*}
    with $r_0=0$, $r_L=\infty$, and $l=\{1,2,\dots,L-1\}$.
    
   \noindent Then, the vulnerability expression can be written as
    \begin{IEEEeqnarray*}{lCl}
        V(S|O) &=& \frac{1}{|\mathcal{S}|}\sum_{l=1}^{L} Q_1\left(\frac{m_l}{\sigma },\frac{r_{l-1}}{\sigma},\frac{r_l}{\sigma }\right).
    \end{IEEEeqnarray*}
    
   \noindent Finally, the information leakage is given by
    \begin{IEEEeqnarray*}{lCl}
        I_{\infty}(S;O) = \log_2 \left( \sum_{l=1}^{L} Q_1\left(\frac{m_l}{\sigma },\frac{r_{l-1}}{\sigma},\frac{r_l}{\sigma }\right) \right).
    \end{IEEEeqnarray*}
    
\end{proof}

\begin{corollary}
    \label{cl: leak to 0}
    For equiprobable secret symbols and uniformly distributed phase $\theta_{An}=\theta_{\text{DM}}(\phi_{\text{Bob}},n) + \theta_{n}(\phi)$, the information leakage in the direction $\phi \neq \phi_{\text{Bob}}$ is null in phase shift key modulation. 
\end{corollary}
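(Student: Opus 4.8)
The plan is to specialize \cref{th: Inf lea} to phase-shift keying and observe that the sum of Marcum $Q$-functions collapses to one. The decisive fact is that in any phase-shift-key constellation every symbol $m$ has the \emph{same} modulus $|m|$, so the set of distinct moduli $\{m_1,\dots,m_L\}$ appearing in \cref{th: Inf lea} is a singleton: $L=1$ and $m_1=|m|$. Equivalently, there are no interior decision radii $r_1,\dots,r_{L-1}$; the only radii left are the trivial endpoints $r_0=0$ and $r_L=\infty$, so the optimal eavesdropper of \cref{th: optmal d(r)} is forced to assign the entire complex plane to a single output value.

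First I would substitute $L=1$ into the expression of \cref{th: Inf lea}, obtaining $I_{\infty}(S;O)=\log_2 Q_1\!\left(m_1/\sigma,\,r_0/\sigma,\,r_L/\sigma\right)=\log_2\!\bigl(Q_1(m_1/\sigma,0)-Q_1(m_1/\sigma,\infty)\bigr)$. Then I would invoke the two elementary endpoint values of the Marcum $Q$-function already used in the proof of \cref{th: Inf lea}: $Q_1(a,0)=1$ for every $a$ (it is the integral of the full Rician density of the amplitude $|m+w|$) and $Q_1(a,\infty)=0$. Hence the bracket is $1$ and $I_{\infty}(S;O)=\log_2 1=0$, which is the claim. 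Equivalently, since the secret symbols are equiprobable, \cref{th: Inf lea} gives $V(S|O)=\tfrac{1}{|\mathcal{S}|}\cdot 1=V(S)$, so no information is leaked.

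As a consistency check --- and to make transparent why the conclusion does not hinge on the annulus geometry --- I would also give the direct argument. Under the hypothesis that $\theta_{An}$ is uniform on $[-\pi,\pi]$, the phase error $\rho$ is itself uniform (a deterministic offset and the independent channel phase preserve uniformity on the circle), so by \eqref{eq: RecSimb r} the conditional law of $R$ given $S=s$ is a rotationally symmetric ``ring'' distribution whose only symbol-dependent parameter is $|m|$; since $|m|$ is common to all PSK symbols, $p_R(r\mid s)$ does not depend on $s$. Therefore $p_S(s\mid o)=p_S(s)$ for every observable $o$, whence $V(S|O)=\sum_{o}p_O(o)\max_{s}p_S(s)=V(S)$ and $I_{\infty}(S;O)=\log_2\!\bigl(V(S|O)/V(S)\bigr)=0$.

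There is no genuinely hard step here: the corollary is a direct specialization of \cref{th: Inf lea}. The only points that warrant care are making explicit that ``phase shift key modulation'' forces $L=1$, and recalling the limiting values $Q_1(a,0)=1$ and $Q_1(a,\infty)=0$; everything else is immediate.
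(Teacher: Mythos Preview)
Your proposal is correct and follows essentially the same route as the paper: specialize \cref{th: Inf lea} to phase-shift keying, observe that $L=1$, and evaluate $Q_1(m_1/\sigma,0,\infty)=1$ to obtain $I_{\infty}(S;O)=\log_2 1=0$. Your additional direct argument via rotational symmetry of $p_R(r\mid s)$ is a nice sanity check that the paper does not include, but the core proof is identical.
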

\begin{proof}
    For the phase-shift keying modulation, we have $L=1$ in \cref{th: Inf lea}. Therefore, the mutual information simplifies to
    \begin{IEEEeqnarray*}{lCl}
        I_{\infty}(S;O) = \log_2 \left( Q_1\left(\frac{m_1}{\sigma },0,\infty \right) \right) = \log_2(1) =0.
    \end{IEEEeqnarray*}
\end{proof}
Among other ways, we can understand \cref{cl: leak to 0} as an optimal limit for the communication security of the proposed system.
\cref{cl: leak to 0} establishes the possibility of secure communication between the transmitter and the legitimate receiver.
In other words, from the point of view of information leakage, \cref{cl: leak to 0} points to the condition in which the proposed system becomes impossible to decipher.
Note that both an eavesdropper or legitimate receiver in the direction $\phi_{\text{Bob}}$ would be able to intercept the message even though $\theta_{An}$ follows a uniform distribution.

\section{Case Study} \label{sec: Case Study}
In this section, we consider a specific arrangement of antennas, in order to assess how the number of antennas and the shape of the constellation impact information leakage.

We adopt a circular antenna array, which offers a 360° field of view in the horizontal plane. 
If the phase centers of the antennas are evenly spread around a circle of diameter $D$, the phase delay can be closely approximated by \cite{Narbudowicz9674846}
\begin{IEEEeqnarray}{lCl}
    \theta_n(\phi) &=& \frac{\pi D}{\lambda} \operatorname{Re}\left(  \exp\left(  2 \pi i \left(\frac{n}{(N+1)}+ \phi \right)\right) \right) \nonumber \\
    &=& \frac{\pi D}{\lambda} \cos \left( \frac{n}{(N+1)}+ \phi \right),
\end{IEEEeqnarray}
where $\lambda$ is the wavelength. 
In addition, we assume that the legitimate receiver is in the direction of $\phi=\pi$, i.e. $\phi_{\text{Bob}}=\pi$, \ac{QPSK} modulation, $\pi D/\lambda = 1$, and $n$ are equiprobable for all the following cases.
Hereinafter, the \ac{BER} will be estimated considering that the receiver has properly completed the handshake and applies a trivial demodulation method. 
That is, it simply demodulates the signal assuming that the modulation is the traditional \ac{QPSK}.
\subsection{Case I: $N=2$}
\cref{fig:Const1} depicts two constellations on the receiver. 
One is in the direction $\phi = 60\degree$, see the blue circles, and the other is in the direction $\phi = 120\degree$, see the red $\times$'s. 
Constellations have been presented with different energy configurations for the sake of better visibility.
The ordered pair next to the symbols consists of the transmitted secret symbol $S$ and the transmitting antenna $n$ as $(Sn)$ with $S \in \{0,1,2,3\}$ and $n \in \{1,2\}$. 
For example, the ordered pair denoted by $12$ means the secret symbol $S$ is $1$ and the transmitting antenna $n$ is $2$.
These two constellations were chosen because they represent the extreme cases. 
In both constellations, the modulation process produces symbols $m=c_{\phi}(s,n)$ identical for different values of $n$.
However, the symbols $m$ that are superimposed for $\phi=60\degree$ represent distinct secret symbols $S$,
while overlapping symbols $m$ for $\phi=120\degree$ represent the same secret symbol $S$.
For instance, we have $c_{60\degree}(3,2)=c_{60\degree}(1,1)$ and $c_{120\degree}(1,2)=c_{120\degree}(1,1)$; therefore, it is impossible to decide whether the transmitted symbol is $s=3$ or $s=1$ for direction $\phi=60\degree$ if it receives $c_{60\degree}(3,2)$, whereas an eavesdropper in direction $\phi=120\degree$ can properly recover $S$.
This distinction can be explained as follows.
An eavesdropper in the direction $\phi=120\degree$ obtains two bits of information whenever it correctly estimates $m$, while another eavesdropper in the direction $\phi=60\degree$ only gets one bit of information on average --- the uncertainty about $S$ is reduced by $1$ bit (to two possibilities) for direction $\phi=60\degree$ given reception of $m$.
\begin{figure}[t!]
	\centering{\includegraphics[width=0.6\linewidth]{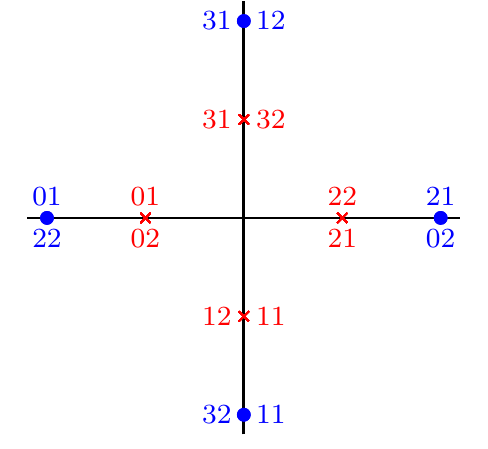}}
	\caption{Constellations on the receiver in the directions of $\phi = 60\degree$ (blue circles) and $\phi = 120\degree$ (red $\times$). The ordered pair next to the symbols is constituted by the transmitted secret symbol and by the antenna used as $(Sn)$ with $S \in \{0,1,2,3\}$ and $n \in \{1,2\}$.}
	\label{fig:Const1}
\end{figure}

Let us now introduce the \ac{AWGN} such that the \ac{SNR} equals 10 dB. 
This scenario is depicted in \cref{fig:ber1}.
Following the constellations as previously discussed, we focus on the following two cases: $\phi=60\degree$ and $\phi=120\degree$.
As aforementioned, there is an information leakage of approximately 2 bits and 1 bit in the direction $\phi=120\degree$ and $\phi=60\degree$, respectively.
However, information leakage cannot be assessed directly through the \ac{BER}.
In other words, the \ac{BER} can somewhat conceal the potential vulnerability of the system.
Note that although the information leakage for $\phi=120\degree$ is higher, the \ac{BER} for the case $\phi=60\degree$ is lower. 
This counter-intuitive effect can be explained as follows: the information leakage is calculated assuming that the eavesdropper applies \cref{th: optmal d(r)}, while the \ac{BER} calculation cannot incorporate the peculiar characteristics of the system.

\cref{fig:ber1} also reveals the sensitivity (in the context of vulnerability) of the system with regard to several directions. 
As a result, we notice a high number of directions in which the vulnerability is maximum.
At best, the system provides information leakage of around 1 bit.
In general terms, we can say that the system with three antennas is vulnerable to attacks.
Next, we consider a large value of $N$ for the sake of completeness.
\begin{figure}[t!]
\centering{\includegraphics[width=0.5\linewidth]{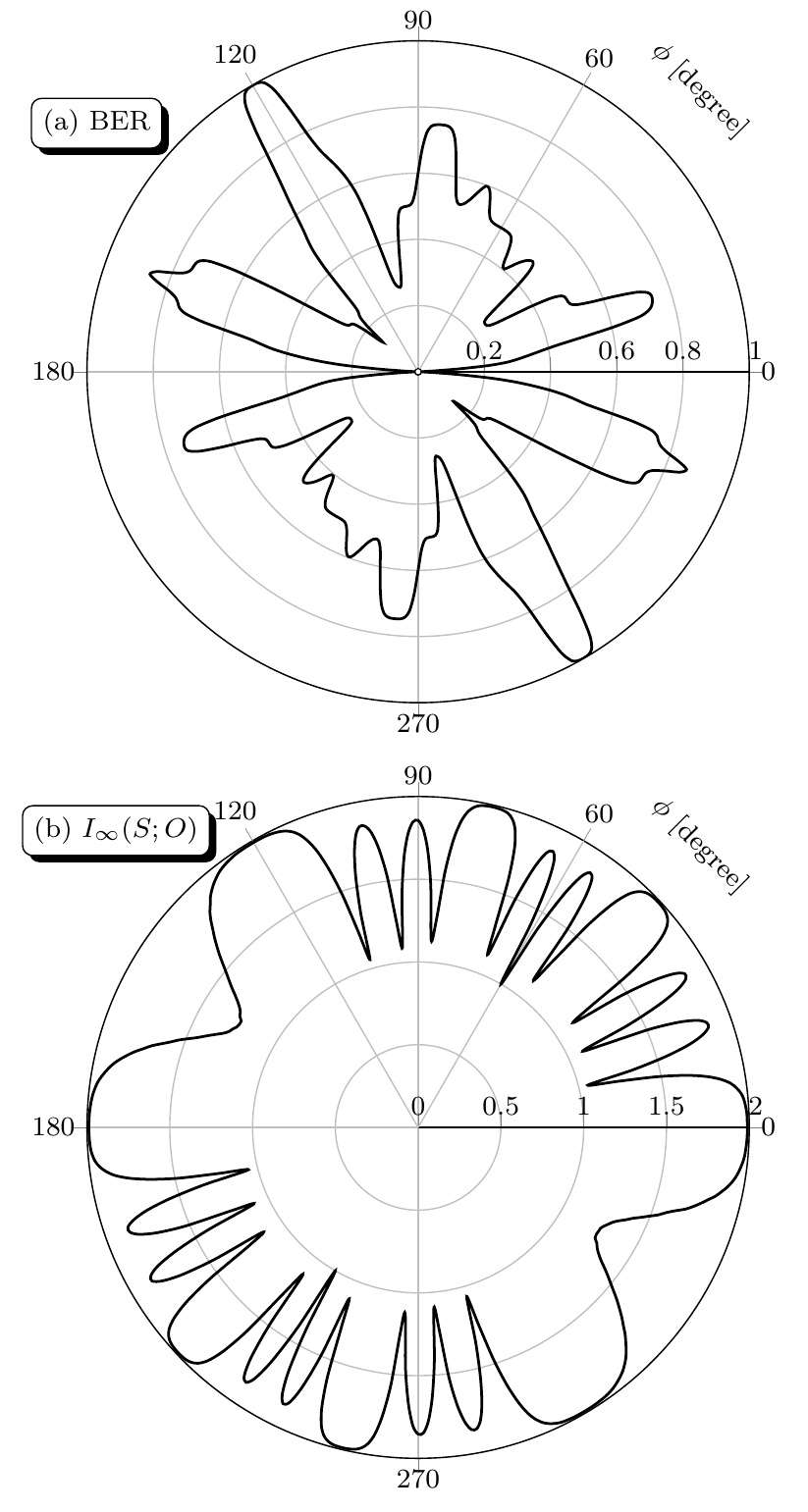}}
	\caption{\ac{BER} and information leakage $I_{\infty}(S;O)$ versus the direction $\phi$ for case $N=2$.}
	\label{fig:ber1}
\end{figure}
\subsection{Case II: $N=4$}
\cref{fig:Const2} shows the constellation on the receiver in the direction $\phi = 10\degree$. 
For this constellation, we can say that there are 4 sets of 4 symbols each, in which the symbols are very close together.
For example, the ordered pairs 11, 33, 22, and 04 constitute one set.
For any of the four sets, it can become almost infeasible in the presence of noise, to distinguish between the 4 symbols belonging to a set.
Furthermore, each of the four symbols of a set comes from a different $S$.
This characteristic of the constellation transfers little or almost no information to the receiver about the secret input $S$ when operating under noise.
Therefore, it is to be expected that it will lead to greater security for communication.
\begin{figure}[t!]
	\centering{\includegraphics[width=0.6\linewidth]{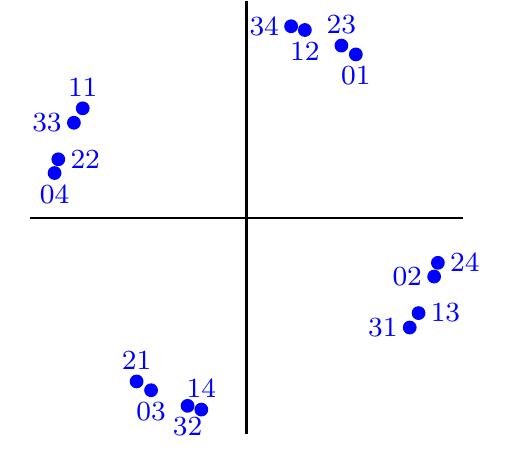}}
	\caption{Constellation on the receiver in the direction $\phi = 10\degree$. The ordered pair next to the symbols is constituted by the transmitted secret symbol and by the antenna used as $(Sn)$ with $S \in \{0,1,2,3\}$ and $n \in \{1,2,3,4\}$.}
	\label{fig:Const2}
\end{figure}

In order to verify the efficiency of the constellation shown in \cref{fig:Const2}, we assume that the system is subject to an \ac{SNR} of $10$ dB.
\cref{fig:ber2} depicts the \ac{BER} and information leakage for this case.
Note that the information leakage is low for $\phi=10\degree$.
Again, the \ac{BER} does not satisfactorily measure the system's behavior in terms of its vulnerability.
As can be seen in the figure, there is no symmetry between the \ac{BER} and information leakage. 
There is a significant reduction in the number of spikes in information leakage as $N$ increases from 2 to 4.
Furthermore, we obtain values below 1 for the information leakage with four antennas.
On the other hand, in general the system is still somewhat vulnerable, as the information leakage is mostly above 1 bit, with a significant portion above 1.5 bits.
The following scenario discusses the case with a higher number of antennas.
\begin{figure}[t!]
	\centering{\includegraphics[width=0.5\linewidth]{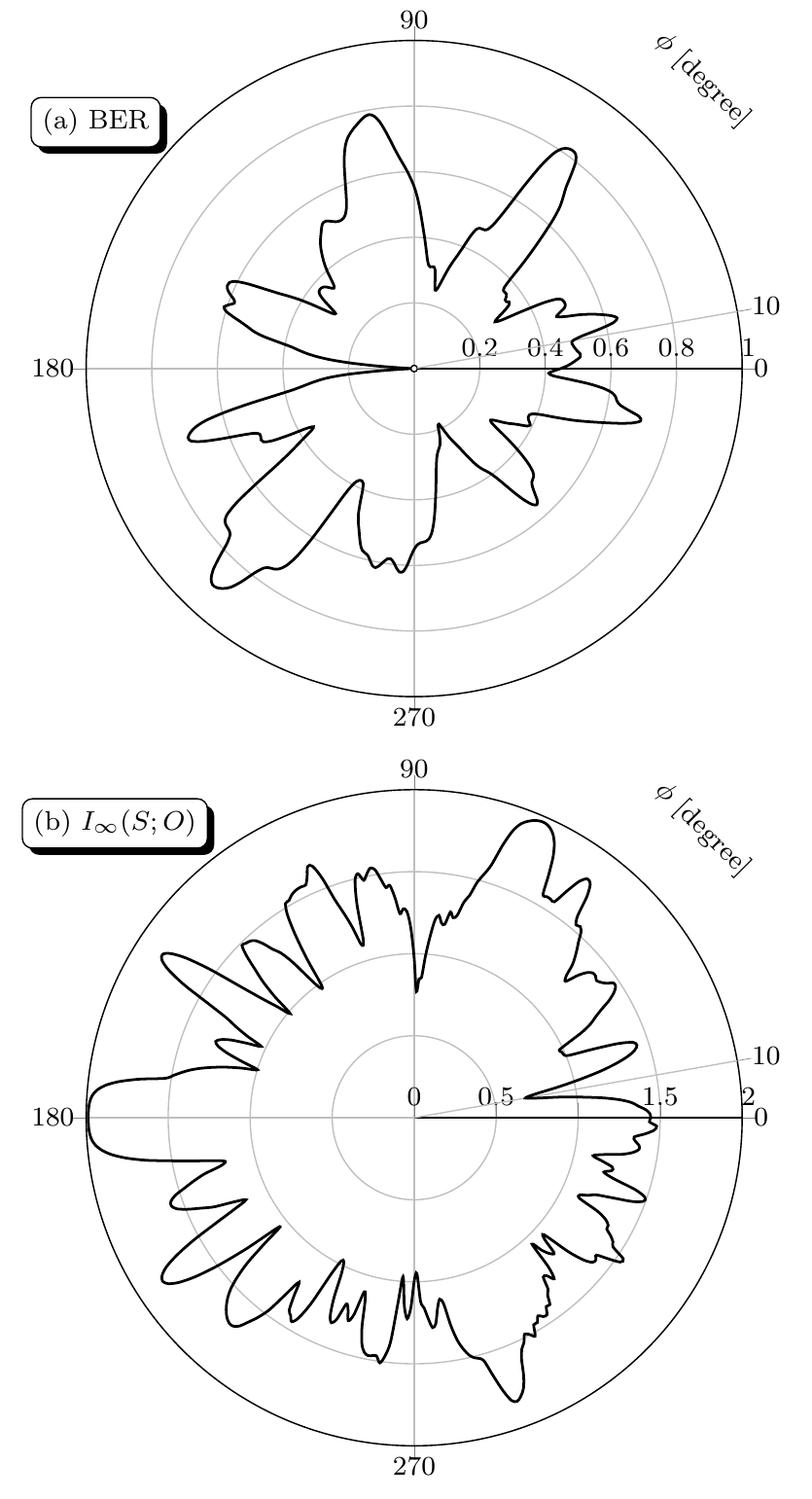}}
	\caption{\ac{BER} and information leakage $I_{\infty}(S;O)$ versus the direction $\phi$ for case $N=4$.}
	\label{fig:ber2}
\end{figure}

\subsection{Case III: $N=14$, $N=24$, $N=34$ and $N=299$}
\cref{fig:inf_leak} depicts the \ac{BER} and information leakage versus the direction $\phi$ for different values of $N$ at an \ac{SNR} of $10$ dB. 
Furthermore, it can be seen that the average values of the information leakage are $0.8641$, $0.7733$, $0.7175$, and $0.6720$ for $N=14$, $N=24$, $N=34$, and $N=299$, respectively. 
It is evident that an increase in the value of $N$ yields a decrease in the information leakage in certain directions of $\phi$.
On the other hand, no significant change in \ac{BER} can be observed with respect to a variation in the number of antennas.
Note that although there is a substantially large jump in the number of antennas from $N=35$ to $N=299$, we do not obtain a significant improvement in the vulnerability of the system.
\begin{figure}[t!]
	\centering{\includegraphics[width=0.5\linewidth]{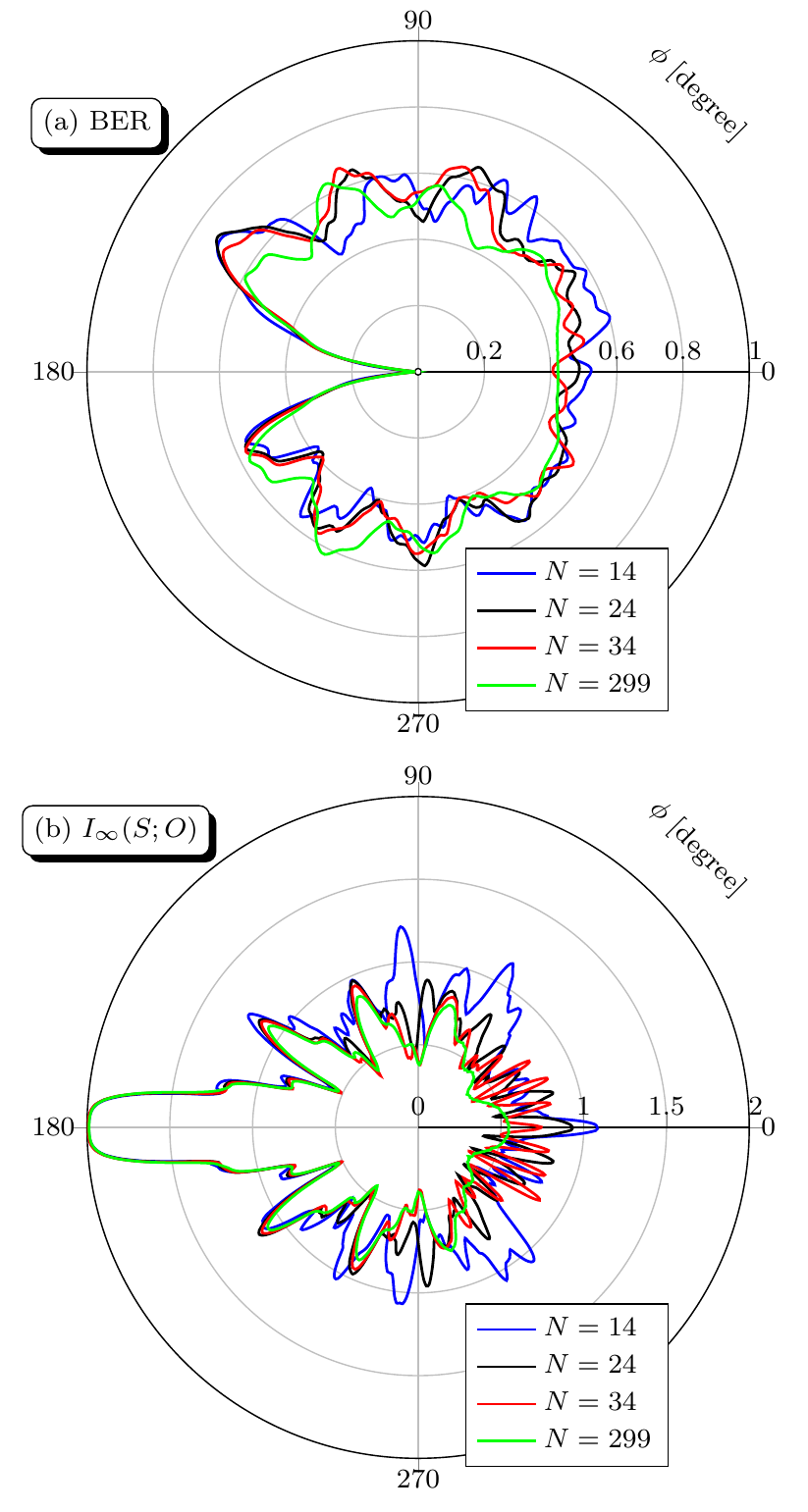}}
	\caption{\ac{BER} and information leakage $I_{\infty}(S;O)$ versus the direction $\phi$ for $N=\{14,24,34,299\}$. The mean values of $I_{\infty}(S;O)$ are $0.8641$, $0.7733$, $0.7175$, and $0.6720$ for $N=14$, $N=24$, $N=34$, and $N=299$, respectively.}
	\label{fig:inf_leak}
\end{figure}

\subsection{Case IV: AWGN dependence}
We now turn our attention to the interdependence of information leakage with the \ac{AWGN} as the last case study.
Unlike the case studies previously presented, the number of antennas is fixed at $N=6$, and the \ac{SNR} varies as follows $\text{SNR}=\{9,15,\infty\}$ dB.
\cref{fig:inf_leak_x_SNR} depicts the \ac{BER} and information leakage for the various \ac{SNR} values.
The mean values of $I_{\infty}(S;O)$ are $1.0712$, $1.4052$, and $2$ for $\text{SNR}=9$ dB, $\text{SNR}=15$ dB, and $\text{SNR}\to \infty$, respectively.
As previously stated, we have a maximum information leakage in any direction in the absence of noise;
in mathematical terms, $\lim_{\sigma \to 0} I_{\infty}(S;O) = H_{\infty}(S)$ and thus $\lim_{\sigma \to 0} I_{\infty}(S;O) = 2$ bits for equiprobable symbols with \ac{QPSK}, corroborating our theoretical deductions.
Again, no significant change in the performance can be noticed through the \ac{BER} for the evaluated \ac{SNR} values; while the leakage information has a clear improvement with the decrease of the \ac{SNR}.
Note that the leakage information reaches the maximum and \ac{BER} tends to 1 in the direction $\phi=260\degree$ for $\text{SNR}=15$ dB.
This case study highlights the interdependence between noise and the vulnerability of \ac{DDM}.
\begin{figure}[b!]
	\centering{\includegraphics[width=0.5\linewidth]{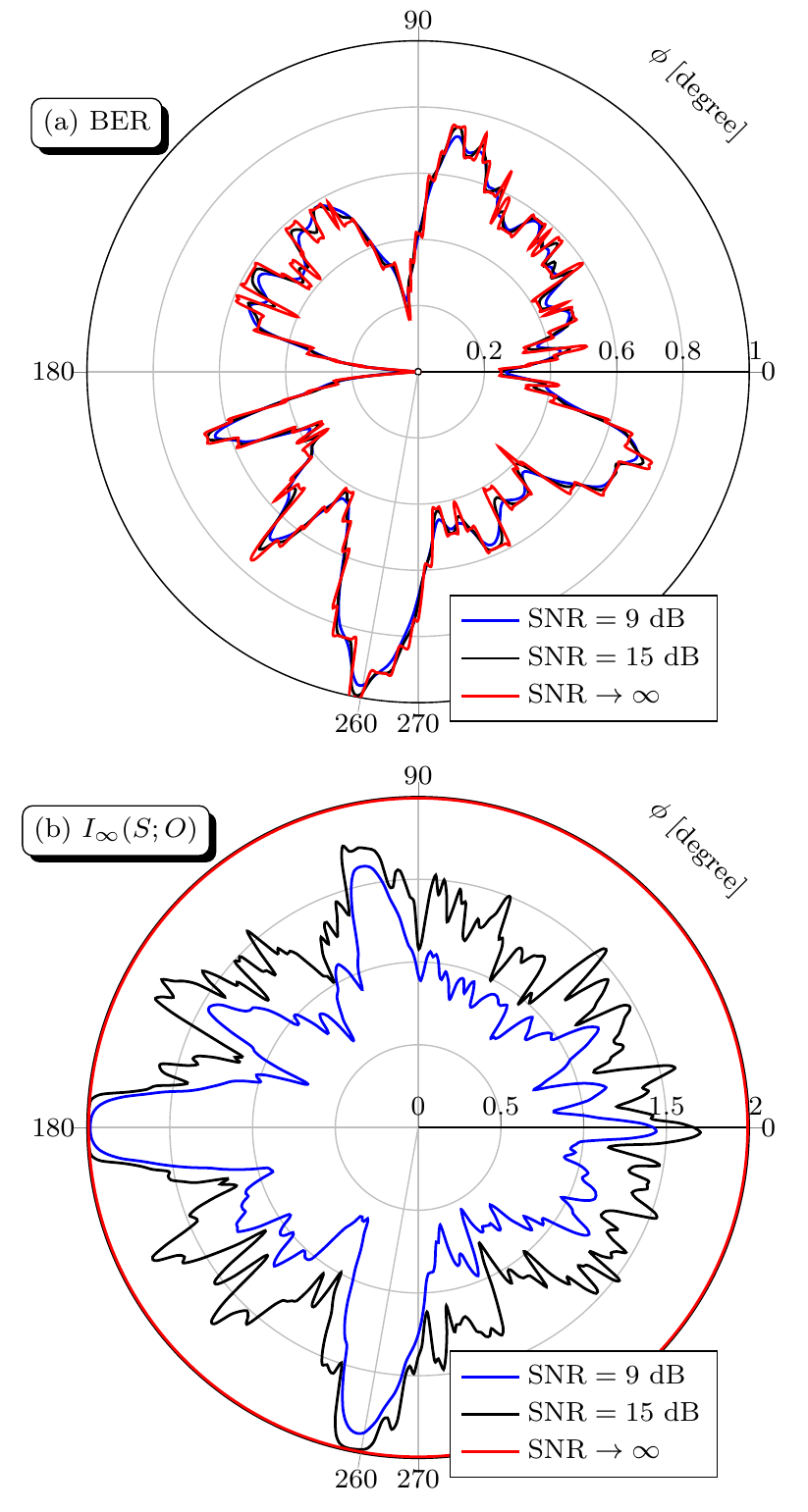}}
	\caption{\ac{BER} and information leakage $I_{\infty}(S;O)$ versus the direction $\phi$ for $\text{SNR}=\{9,15, \infty \}$ dB with $N=6$. The mean values of $I_{\infty}(S;O)$ are 1.0712, 1.4052, and 2 for $\text{SNR}=9$ dB, $\text{SNR}=15$ dB, and $\text{SNR}\to \infty$, respectively.}
	\label{fig:inf_leak_x_SNR}
\end{figure}

Given the four previous cases, we propose the following statement.
\begin{conjecture}\label{conj:1}
    The closer the values of $c_{\mathcal{A}}(s,n) \, \forall s \in \mathcal{S}$ for the same $n$ are, the less vulnerable the wireless system becomes.
\end{conjecture}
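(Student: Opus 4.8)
\emph{Proof strategy for \cref{conj:1}.} The plan is to make ``the $c_{\mathcal A}(s,n)$ are close for the same $n$'' precise through the largest per-antenna spread
\[
\delta\ :=\ \max_{n\in\{1,\dots,N\}}\ \max_{s,s'\in\mathcal S}\ \bigl|c_{\mathcal A}(s,n)-c_{\mathcal A}(s',n)\bigr|,
\]
to read ``less vulnerable'' as a smaller $V(S|O)$, and then (for equiprobable $S$, as in \cref{th: Inf lea}) to prove the quantitative estimate
\[
I_{\infty}(S;O)\ =\ \log_2\!\bigl(|\mathcal S|\,V(S|O)\bigr)\ \le\ \log_2\!\Bigl(1+\tfrac{|\mathcal S|\,N}{\sqrt{2\pi}}\,\tfrac{\delta}{\sigma}\Bigr),
\]
which is increasing in $\delta$ and vanishes as $\delta\to0$; this gives the monotone trend asserted by the conjecture together with its limiting, perfectly private, endpoint. (In the model of \cref{sec: Case Study} the set $\{c_{\mathcal A}(s,n):s\in\mathcal S\}$ is merely the base constellation rotated by an $n$-dependent angle, so $\delta$ is governed by the modulation rather than the array; the argument below does not use this and applies whenever a per-antenna clustering is available.)

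First I would reduce to a scalar channel. In the regime of \cref{th: Inf lea} the induced phase is uniform, so by \cref{lem: Max V(S|O)} the eavesdropper's optimal decision regions are annuli and the sufficient statistic is $|r|$; by \cref{def:Vul} with equiprobable $S$ and the optimal demodulator $d_{\mathcal A}$ of \cref{th: optmal d(r)}, $V(S|O)=\tfrac1{|\mathcal S|}\sum_{o\in\mathcal O}\max_{s\in\mathcal S}p_O(o|s)$ with $|\mathcal O|\le|\mathcal S|N$. As the reference ``$\delta=0$'' case I would observe that if each antenna $n$ always emits its centroid $\bar c(n):=\tfrac1{|\mathcal S|}\sum_s c_{\mathcal A}(s,n)$, then $|r|$ is conditionally independent of $S$ given $n$, so $p_O(o|s)$ does not depend on $s$, $V(S|O)=V(S)$, and the leakage is exactly zero, in agreement with \cref{cl: leak to 0}.

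The core step is a coupling estimate around this endpoint. Fix $o$ and $s\neq s'$, and let $P_{|r|\,|\,s}$ be the conditional law of $|r|$ given $s$, i.e.\ a uniform mixture over $n$ of Rician laws with non-centralities $|c_{\mathcal A}(s,n)|$ and scale $\sigma$. Then the data-processing inequality (through $d_{\mathcal A}$, and through $z\mapsto|z|$), convexity of total variation, and the elementary bound $\|\mathcal N_{\mathbb C}(\mu_1,\sigma)-\mathcal N_{\mathbb C}(\mu_2,\sigma)\|_{\mathrm{TV}}\le|\mu_1-\mu_2|/(\sqrt{2\pi}\,\sigma)$ combine to give
\[
\bigl|p_O(o|s)-p_O(o|s')\bigr|\ \le\ \bigl\|P_{|r|\,|\,s}-P_{|r|\,|\,s'}\bigr\|_{\mathrm{TV}}\ \le\ \tfrac1N\textstyle\sum_n\tfrac{|c_{\mathcal A}(s,n)-c_{\mathcal A}(s',n)|}{\sqrt{2\pi}\,\sigma}\ \le\ \tfrac{\delta}{\sqrt{2\pi}\,\sigma},
\]
the last step using $\bigl||c_{\mathcal A}(s,n)|-|c_{\mathcal A}(s',n)|\bigr|\le|c_{\mathcal A}(s,n)-c_{\mathcal A}(s',n)|\le\delta$. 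Since $\sum_o p_O(o|s)=1$ for every $s$ and $\tfrac1{|\mathcal S|}\sum_{s'}p_O(o|s')\ge\min_{s'}p_O(o|s')$, it follows that
\[
V(S|O)-V(S)\ =\ \tfrac1{|\mathcal S|}\sum_{o}\Bigl(\max_{s}p_O(o|s)-\tfrac1{|\mathcal S|}\textstyle\sum_{s'}p_O(o|s')\Bigr)\ \le\ \tfrac1{|\mathcal S|}\sum_{o}\bigl(\max_{s}p_O(o|s)-\min_{s}p_O(o|s)\bigr)\ \le\ \tfrac{|\mathcal O|}{|\mathcal S|}\cdot\tfrac{\delta}{\sqrt{2\pi}\,\sigma}\ \le\ \tfrac{N\delta}{\sqrt{2\pi}\,\sigma},
\]
and rearranging (using $V(S)=1/|\mathcal S|$) yields the target bound. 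The uniform-phase hypothesis enters only in the reduction; run directly in $\mathbb C$ on \eqref{eq: RecSimb r} after conditioning on $n$, the same coupling removes it.

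The hard part is that this estimate is only one-sided and asymptotic: it shows a small per-antenna spread forces small leakage, but not that tightening the clusters is \emph{always} beneficial. To obtain genuine monotonicity --- e.g.\ that along $c^{(t)}_{\mathcal A}(s,n)=\bar c(n)+t\bigl(c_{\mathcal A}(s,n)-\bar c(n)\bigr)$, $t\in[0,1]$, one has $\tfrac{\mathrm{d}}{\mathrm{d}t}V(S|O)\ge 0$ --- a Danskin/envelope argument disposes of the $t$-dependence of the optimal annulus radii and of the inner $\arg\max_s$, leaving $\tfrac{\mathrm{d}}{\mathrm{d}t}V(S|O)=\sum_o p_S(s^*_o)\,\partial_t\,p^{(t)}_O(o\,|\,s^*_o)$. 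Establishing that this is non-negative for an arbitrary array geometry is where I expect the real obstacle to lie; a plausible route is a monotone-likelihood-ratio property of the Rician family in its non-centrality parameter together with an ordering assumption on the $|c_{\mathcal A}(s,n)|$, but pinning the sign down in full generality appears to be the crux of \cref{conj:1}.
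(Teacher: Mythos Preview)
The statement you are addressing is explicitly a \emph{Conjecture} in the paper and is not proved there. The authors arrive at it heuristically, as a summary of the four numerical case studies in \cref{sec: Case Study}; the only supporting argument is the informal paragraph that follows the statement (``constellations that coalesce symbols from the same secret input $S$ are more vulnerable'') together with two design guidelines. There is no proof in the paper to compare your attempt against.

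Your write-up therefore goes well beyond what the paper offers: you formalise the hypothesis through a per-antenna spread $\delta$ and obtain a one-sided quantitative bound $I_\infty(S;O)\le\log_2\bigl(1+|\mathcal S|N\delta/(\sqrt{2\pi}\,\sigma)\bigr)$ via a total-variation coupling. Two caveats are worth flagging. First, as you yourself note parenthetically, in the paper's own model the set $\{c_{\mathcal A}(s,n):s\in\mathcal S\}$ is a rigid rotation of the base constellation, so your $\delta$ equals the constellation diameter and is independent of the array and of $\phi$; in that setting your bound degenerates into a statement about the modulation rather than about the DDM mechanism, and the limit $\delta\to0$ is the trivial regime where even Bob receives nothing. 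The case-study evidence the authors actually cite (e.g.\ $c_{60^\circ}(3,2)=c_{60^\circ}(1,1)$ versus $c_{120^\circ}(1,2)=c_{120^\circ}(1,1)$) is about closeness of symbols with \emph{different} $s$ and \emph{different} $n$, which your $\delta$ does not capture; so your formalisation of ``close for the same $n$'' likely diverges from the authors' intended reading of their own (admittedly loosely worded) conjecture. Second, and as you correctly identify, your bound is not monotonicity: the Danskin/envelope step you sketch is precisely where the conjecture lives, and that part remains open in your proposal as it does in the paper.
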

\Cref{conj:1} can be explained, among other ways, as follows. 
The system's vulnerability depends on the format of the constellation generated by the \ac{DDM}. 
In addition, constellations that coalesce symbols from the same secret input $S$ are more vulnerable. 
Therefore, two design guidelines can be established: (i) given an antenna array, security can be increased in a specific direction by selecting for transmission only those antennas that meet \cref{conj:1}; (ii) the antenna array design should be based on \cref{conj:1}.

\section{Conclusion}
\label{sec: Conclusion}
This paper has assessed the vulnerability of \ac{DDM} by evaluating of the information leakage measure, and has also addressed the fundamental limits of the information leakage.
Despite its usefulness as a performance metric for the evaluation of a wireless communication system, the proposed study has revealed the inefficiency of the \ac{BER} as a security measure as it can potentially hide the system vulnerability.
Moreover, it is shown that an eavesdropper is able to remove the maximum amount of information from the transmitted signal in some directions by meeting \cref{th: optmal d(r)} or \cref{th: Inf lea}. 
With this (optimized) eavesdropping model, several proposed guidelines on the design of the antenna array that minimizes the information leakage, are presented.
Sufficient conditions for a null information
leakage in the direction $\phi \neq \phi_{\text{Bob}}$ are presented in \cref{cl: leak to 0}.
Although the proposed work provides some useful insights, it assumes an ideal eavesdropping model wherein the eavesdropper has perfect knowledge of the channel statistics, which is impractical in realistic models. 
One could investigate the impact of imperfect channel estimation on the eavesdropper's performance as a possible future extension of the work presented here.

%

	\bibliographystyle{IEEEtran}
	\bibliography{IEEEabrv,References}

\end{document}